 \newtheorem{thm}{Theorem}[section]
 \newtheorem{cor}[thm]{Corollary}
 \newtheorem{lemma}[thm]{Lemma}
 \newtheorem{prop}[thm]{Proposition}
 \theoremstyle{definition}
 \theoremstyle{remark}
 \numberwithin{equation}{section}
\newcommand{\caA}{{\mathcal A}}
\newcommand{\caB}{{\mathcal B}}
\newcommand{\caC}{{\mathcal C}}
\newcommand{\caG}{{\mathcal G}}
\newcommand{\caH}{{\mathcal H}}
\newcommand{\caK}{{\mathcal K}}
\newcommand{\caL}{{\mathcal L}}
\newcommand{\caM}{{\mathcal M}}
\newcommand{\caS}{{\mathcal S}}
\newcommand{\caU}{{\mathcal U}}
\newcommand{\caV}{{\mathcal V}}
\newcommand{\bbC}{{\mathbb C}}
\newcommand{\bbI}{{\mathbb I}}
\newcommand{\bbN}{{\mathbb N}}
\newcommand{\bbR}{{\mathbb R}}
\newcommand{\bbS}{{\mathbb S}}
\newcommand{\bbZ}{{\mathbb Z}}
\newcommand{\iu}{\mathrm{i}}
\newcommand{\str}{^{*}}
\newcommand{\Tr}{\mathrm{Tr}}
\newcommand{\be}{\begin{equation}}
\newcommand{\ee}{\end{equation}}
\newcommand{\bea}{\begin{eqnarray}}
\newcommand{\eea}{\end{eqnarray}}
\newcommand{\beann}{\begin{eqnarray*}}
\newcommand{\eeann}{\end{eqnarray*}}
\begin{document}

\renewcommand{\thefootnote}{\fnsymbol{footnote}}
\title[Topological order and anyons]{Local disorder, topological ground state degeneracy and entanglement entropy, and discrete anyons}
\author{Sven Bachmann}
\address{Mathematisches Institut der Universit{\"a}t M{\"u}nchen \\ 80333 M{\"u}nchen \\ Germany}
\email{sven.bachmann@math.lmu.de}

\date{\today }

\begin{abstract}
In this comprehensive study of Kitaev's abelian models defined on a graph embedded on a closed orientable surface, we provide complete proofs of the topological ground state degeneracy, the absence of local order parameters, compute the entanglement entropy exactly and characterise the elementary anyonic excitations. The homology and cohomolgy groups of the cell complex play a central role and allow for a rigorous understanding of the relations between the above characterisations of topological order.
\end{abstract}

\maketitle

\section{Introduction}\label{sec:intro}

Despite being heavily studied both in concrete models and in more abstract settings, the very notion of topological order is still lacking a complete mathematical treatment. A comprehensive picture of the relationships between various characterisations of topological order is missing. Among them, one should mention (i) local disorder, namely the absence of local order parameters and the emergence of superselection sectors in the infinite volume limit, (ii) a ground state degeneracy that depends on the topology of the surface underlying the model, (iii) the presence of a topological term in the entanglement entropy, and (iv) the fact that elementary quasi-particles are anyons. The solvable two-dimensional topologically ordered models~\cite{Kitaev:2003ul,Levin:2005fy,Kitaev:2006ik} exhibit all of these characteristics and they are all -- heuristically -- related, see also~\cite{Chen:2010gb} for a renomalisation picture. In this article, we clarify these relationships in a rigorous mathematical manner in the case of the abelian models of Kitaev~\cite{Kitaev:2003ul}. We aim at exhibiting the structures that may survive more generally, in particular that may be stable under the spectral flow~\cite{Hastings:2005cs,Bachmann:2011kw}, thereby providing invariants for the classification of topological phases.

We analyse in detail Kitaev's models, namely the spin-$\frac{1}{2}(d-1), d\in\bbN,$ models based on the abelian groups $\bbZ_d = \bbZ/ d\bbZ$. We consider only finite volumes, where the quantum spins lie on the edges of a graph embedded on a connected, closed orientable surface of genus $g$. For a related discussion of the infinite volume limit in the setting of algebraic quantum field theory, see e.g.~\cite{Naaijkens2015}, and~\cite{Cha:2016vl} for a detailed discussion of the set of ground states. The toric code model was also studied in~\cite{Freedman:2001dp} in the case of the projective plane $\bbR P^2$, a compact surface with boundary. Although most of the results presented here can be found in some form in the literature~\cite{Kitaev:2003ul, Bombin:2008es}, we aim in this article for mathematical completeness, simplicity (for example, there is no need to introduce ribbon operators) and concentrate in particular on the role played by cycles and cocycles in these models: the homology and cohomology groups are indeed the essential element relating all aspects of topological order.

We should mention that braid statistics and topological superselection sectors were already discussed in great generality in the context of algebraic quantum field theory, see e.g.~\cite{Fredenhagen:1989wr,Frohlich:1990aa}. Although the general picture is heuristically the same and although some aspects of this local quantum physics point of view can be rigorously translated for quantum spin systems in the infinite volume limit, the finite volume systems analysed here exhibit different and interesting aspects of topological order. 

After introducing the model in Section~\ref{sec:Models}, we first consider its ground state space in Section~\ref{Sec:GSS} and prove that it carries a joint representation of the first homology and cohomology groups, Theorem~\ref{thm:SBraid}. In Section~\ref{GSEE}, we compute the ground state entanglement entropy of any finite subset of spins, Theorem~\ref{thm:EE}, and give a simple explanation of both the area law and the topological entanglement entropy in terms of cycles. Finally, we turn to excited states in Section~\ref{sec:Excited states}. We identify elementary excitations which are string-localised eigenstates. In~Theorem~\ref{thm:ExcitedBraids}, we prove that the space of two particle-antiparticle pairs is unitarily related to the ground state space, and that their local braiding is equal to a combination of topological operators. This is our main result, as it exhibits the equivalence of the topological ground state degeneracy and the anyonic nature of the (distinguishable) particles. This had already been observed in~\cite{Kitaev:2003ul}, although only in the continuum limit where the braid group ought to arise.

Finally, we note that for non-interacting fermions having topological properties -- quantum Hall systems and topological insulators -- topological order is also manifest in the bulk-edge correspondence~\cite{Ryu:2002gj,Qi:2006jm, Graf:2013ju} and index theorems~\cite{Prodan:2016aa}. In the context of quantum spin systems, bulk-edge correspondence remains a mostly unexplored field in general, but \cite{Bachmann:2014fl,Bachmann:2015ia} has showed that even trivial bulk phases can have interesting edge excitations, while \cite{Beigi:2011kr,Kitaev:2012bp} have considered topologically ordered models with boundaries. Although a complete understanding of topological phases will indeed require the bulk-edge duality to be considered in details, we shall leave this aspect aside in the present article. 

\section{Kitaev's abelian models}\label{sec:Models}

Kitaev's Hamiltonians are naturally defined on an arbitrary cell decomposition of a two-dimensional surface. We present the geometric and topological setting and define the quantum spin system and the Hamiltonian.

\subsection{A quantum spin system}

We shall define the model on an arbitrary connected closed orientable surface. It is topologically completely characterised by its genus, being homeomorphic to the sphere with $g$ handles $T_g$, for some $g\in\bbN$ (i.e.~$T_0 = \bbS^2$ is the $2$-sphere, $T_1 = \bbS^1\times\bbS^1$ is the torus, and so on). We consider an arbitrary polygon decomposition $G$ of $T_g$, namely
\begin{equation*}
G = \caG_0\cup\caG_1\cup\caG_2,
\end{equation*}
where $\caG_0$ is a set of \emph{vertices}, $\caG_1$ a set of \emph{edges} and $\caG_2$ a set of \emph{faces}. Note that in this definition, the union is not disjoint: a face contains its edges, an edge its vertices, and we shall write, slightly abusing notation $e\in f$ or $e\ni v$ to indicate that the edge $e$ is part of the face $f$ or that $v$ is one of its vertices. The decomposition is such that any edge lies between exactly two different vertices, and is shared by exactly two different faces, as would be the case for a triangulation of $T_g$. The edges carry an arbitrary but fixed orientation, while the faces inherit the orientation of the surface. We shall generally refer to an element in $\caG_n$ as an \emph{$n$-cell}, $n=0,1,2$, and they are all homeomorphic to an $n$-disc $D_n:=\{x\in\bbR^n:\vert x \vert\leq 1\}$.  A decomposition $G\str$ of $T_g$ that is dual to $G$ can be obtained by exchanging the roles of vertices and faces, namely by placing a vertex $v\str\in\caG^0$ in each face of $\caG_2$, joining dual vertices by a dual edge $e\str\in\caG^1$ through each edge in $\caG_1$, thereby determining a set of dual faces $f\str\in\caG^2$ which is in one-to-one correspondence with the set of vertices $\caG_0$. The dual edges are oriented so that $(e,e\str,n)$ form a right-handed trihedron, where $n$ is the outward normal of the surface. 

This geometric structure being set, we can define a quantum spin system where the spins live on the edges of $G$, and with two types of interactions: one among the spins shared by a vertex and one among the spins around a face. For each edge $e\in\caG_1$ of $G$, let
\begin{equation*}
\caH_e := \bbC^d
\end{equation*}
be the state space of a quantum spin-$\frac{1}{2}(d-1)$, and the total Hilbert space of the model is
\begin{equation*}
\caH_G := \otimes_{e\in\caG_1}\caH_e.
\end{equation*}
Note that any spin can equivalently be associated to a dual edge $e\str$. The matrix algebra of linear operators on~$\caH_G$ shall be denoted $\caA_G$. If $\caK\subset \caH_G$ is a subspace, we denote $\caA(\caK):=\{O\in\caA_G: O\caK\subset\caK\}$. We denote $\{l_i\}_{i=0}^{d-1}$ an orthonormal basis of~$\caH_e$. For the purpose of embedding the abelian models in the broader class of the quantum double models, one may identify $\caH_e$ with the group ring $\bbC[\bbZ_d]$ of the cyclic group $\bbZ_d:=\bbZ/d\bbZ$ over the complex numbers. All additions below are $(mod\ d)$. We define two elementary operators on $\caH_e$ by their action on the basis, namely
\begin{equation}\label{def:XZ}
X l_i = l_{i+1},\qquad Z l_i = \omega^i l_i,\qquad (0\leq i\leq d-1),
\end{equation}
where $\omega = \exp(2\pi\iu /d)$. Note that $X$ and $Z$ are unitary matrices and that 
\begin{equation}\label{XZ}
ZX = \omega XZ.
\end{equation}
For any vertex $v\in \caG_0$, and any face $f\in\caG_2$, let
\begin{equation}\label{litte ab}
a_v:= \prod_{e\ni v} X_{e}^{\varepsilon(e)},\qquad b_f:= \prod_{e\in f} Z_{e}^{\varepsilon(e)}
\end{equation}
where $X_{e}^{\varepsilon(e)}$ is $X_e$, resp.~$X_e^{-1} = X_e\str$ if the edge is incoming, resp.~outgoing, from $v$, and $Z_{e}^{\varepsilon(e)}$ is $Z_e$, resp.~$Z_e\str$, if the edge corresponds to the positive, resp. negative, orientation of the face (as we shall see, this is in fact best written in terms of $1$-(co)cycles). Finally, let
\begin{equation}\label{vertex and face operators}
A_v := \frac{1}{d}\sum_{j=0}^{d-1}(a_v)^j,\qquad B_f := \frac{1}{d}\sum_{j=0}^{d-1} (b_f)^j,
\end{equation}
and the Hamiltonian of Kitaev's model is given by
\begin{equation*}
H_G:= -\sum_{v\in\caG_0}A_v - \sum_{f\in\caG_2}B_f.
\end{equation*}
%

\subsection{Homology and cohomology groups}

We briefly recall the topological notions associated with cell complexes that will play an important role in the analysis. The \emph{$n$-chain group} $\caC_n(G;\bbZ_d)$ is the abelian group generated by oriented $n$-cells, namely $c\in \caC_n(G;\bbZ_d)$ is expressed as $c = \sum_{\sigma\in \caG_n} c_\sigma \sigma$, where $c_\sigma\in\bbZ_d$. Note that $-\sigma$ is the $n$-cell $\sigma$ with the opposite orientation. 
We denote the boundary of an $n$-cell $\sigma$ by $\partial_n \sigma$ (analytically, if $\sigma$ is the image of the homeomorphism $\varphi_\sigma$ defined on the disc $D_n$, then $\partial_n\sigma = \varphi_\sigma(\partial D_n)$), and it is by construction an $(n-1)$-chain. In particular, if $e$ is an edge oriented from $v$ to $\tilde v$, then $\partial_1 e = \tilde v - v$. As noted earlier, faces inherit the orientation of the surface, see Figure~\ref{Fig:Bd}. With this, $\partial_n$ can be extended to a linear operator $\partial_n: \caC_n(G;\bbZ_d)\to \caC_{n-1}(G;\bbZ_d)$. We shall usually drop the index $n$, and note that $\partial^2 = 0$.
\begin{figure}
\includegraphics{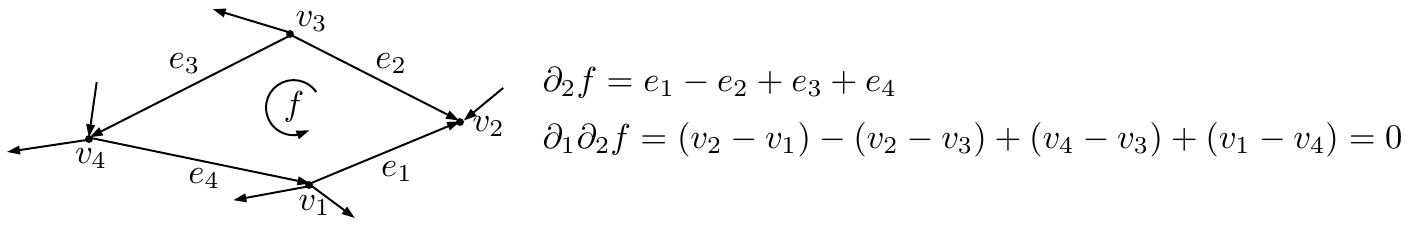}
\caption{An oriented $2$-cell $f$ of the polygon decomposition $G$, its boundary $\partial f$ which is a $1$-chain, and the fact that $\partial^2 f =0$ in this concrete case.}
\label{Fig:Bd}
\end{figure}

An \emph{$n$-cycle} is a $c\in \caC_n(G;\bbZ_d)$ such that $\partial c = 0$, and we denote the set thereof by $\caL_n(G;\bbZ_d)$ which is a subgroup of $\caC_n(G;\bbZ_d)$. An \emph{$n$-boundary} is a $b\in \caC_n(G;\bbZ_d)$ for which there exists a $c\in \caC_{n+1}(G;\bbZ_d)$ such that $b = \partial c$. It is also a subgroup $\caB_n(G;\bbZ_d)$ of $\caC_n(G;\bbZ_d)$. With this, the \emph{$n$th homology group} is
\begin{equation*}
H_n(G;\bbZ_d) := \caL_n(G;\bbZ_d) / \caB_n(G;\bbZ_d),
\end{equation*}
namely the set of equivalence classes of $n$-cycles, where two cycles are equivalent if they differ by a boundary. Finally, the \emph{$n$th Betti number} is the cardinality of the smallest generating set of $H_n(G;\bbZ_d)$. We have
\begin{equation*}
b_1(G;\bbZ_d) = 2g
\end{equation*}
for all $d\in\bbN$.

The same notions can be associated with the dual cell decomposition of $T_g$. Through $G\str$, one naturally defines the cochain groups $\caC^n(G;\bbZ_d)$, the cocycles and coboundaries and the \emph{$n$th cohomology group}
\begin{equation*}
H^n(G;\bbZ_d) := H_n(G\str;\bbZ_d),
\end{equation*}
namely the $n$th homology group of the dual complex. 

Finally, we note that $H_1(G;\bbZ_d)\cong \bigoplus_{i=1}^g\left(\bbZ_d\oplus \bbZ_d\right)$, so that any homology class can be represented as a pair $(\alpha,\beta)\in(\bbZ_d)^{g}\oplus(\bbZ_d)^{g}$, geometrically corresponding to $1$-cycles winding around the $i$th hole of the torus $(\alpha_i,\beta_i)$ times in the two possible directions. The same holds for cohomology classes. 

As we shall see, the structure of the quantum spin models depends intimately on the homology and cohomology groups. For more details on these algebraic topological aspects, we refer to~\cite{Nakahara:2003aa}.

\section{Topological ground state degeneracy}\label{Sec:GSS}
This section is dedicated to the explicit construction of the ground state space $\caS_G$ of $H_G$. It exhibits topological degeneracy, which is intimately related to the (co)homology groups.
\begin{lemma}\label{XZ_Basis}
(i) The matrices $X,Z$ defined in~(\ref{def:XZ}) generate the full matrix algebra $\caM_d(\bbC)$; \\ (ii) The linear span of $\{X^k Z^j:k,j\in\bbZ_d\}$ is $\caM_d(\bbC)$.
\end{lemma}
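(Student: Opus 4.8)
The plan is to prove part (ii) first, since part (i) follows immediately: if the $d^2$ matrices $\{X^k Z^j : k,j \in \bbZ_d\}$ span $\caM_d(\bbC)$, then in particular the algebra generated by $X$ and $Z$ (which contains all such products) is all of $\caM_d(\bbC)$, giving (i). So the heart of the matter is a linear-algebra statement: $d^2$ specified matrices span a $d^2$-dimensional space.

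Since there are exactly $d^2$ matrices $X^k Z^j$ and $\dim \caM_d(\bbC) = d^2$, it suffices to prove they are \emph{linearly independent}. The cleanest route is to exhibit an inner product under which they are mutually orthogonal. I would equip $\caM_d(\bbC)$ with the Hilbert–Schmidt inner product $\braket{A}{B} := \Tr(A\str B)$ and compute $\Tr\bigl((X^k Z^j)\str (X^{k'} Z^{j'})\bigr)$. Using $(X^k Z^j)\str = Z^{-j} X^{-k}$ together with the commutation relation~(\ref{XZ}), namely $ZX = \omega XZ$, one pushes all the $Z$'s to one side and all the $X$'s to the other, picking up a power of $\omega$, so that the trace reduces to $\Tr(X^{k'-k} Z^{j'-j})$ up to a scalar phase. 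The key computation is then $\Tr(X^a Z^b)$ for $a,b \in \bbZ_d$: from the definitions~(\ref{def:XZ}), $X^a Z^b$ acts on the basis by $X^a Z^b\, l_i = \omega^{bi}\, l_{i+a}$, so its diagonal entries vanish unless $a \equiv 0 \pmod d$, in which case $\Tr(Z^b) = \sum_{i=0}^{d-1}\omega^{bi}$, a geometric sum equal to $d$ when $b \equiv 0$ and $0$ otherwise. Hence $\Tr(X^a Z^b) = d\,\delta_{a,0}\delta_{b,0}$, and the $X^k Z^j$ form an orthogonal system of $d^2$ nonzero vectors, hence a basis.

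I expect the only delicate point to be the bookkeeping of the phase factor when commuting $Z^{-j}X^{-k}$ past $X^{k'}Z^{j'}$: one must apply~(\ref{XZ}) repeatedly to move $X^{-k}$ through $Z^{j'}$, which produces a factor $\omega^{kj'}$ (or its inverse, depending on convention), and verify this phase is harmless since it never vanishes. Once the orthogonality relation is in hand, linear independence is immediate and (ii) follows by dimension count; (i) is then a one-line corollary as noted above. There is no genuine obstacle here beyond careful tracking of the Weyl commutation signs; the argument is the standard verification that the $X^k Z^j$ are (unnormalised) Weyl operators forming an orthogonal basis of $\caM_d(\bbC)$.
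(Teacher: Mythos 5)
Your argument is correct, but it takes a genuinely different route from the paper. You prove spanning by a dimension count: the $d^2$ operators $X^kZ^j$ are mutually orthogonal for the Hilbert--Schmidt inner product because $\Tr\bigl((X^kZ^j)\str X^{k'}Z^{j'}\bigr)=\omega^{-j(k'-k)}\,\Tr(X^{k'-k}Z^{j'-j})=d\,\delta_{k,k'}\delta_{j,j'}$, and $d^2$ nonzero orthogonal elements of a $d^2$-dimensional space form a basis. The phase bookkeeping you flag is indeed harmless, and the trace identity $\Tr(X^aZ^b)=d\,\delta_{a,0}\delta_{b,0}$ is exactly right. The paper instead proves spanning constructively: it introduces the spectral projections $T_k=\frac{1}{d}\sum_{j}(\omega^{-k}Z)^j$ onto $l_k$ and the shifts $L_j=X^j$, and observes that $L_{j-i}T_i=\kettbra{l_j}{l_i}$ are precisely the canonical matrix units, each visibly a linear combination of the $X^kZ^j$. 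Your approach is the standard Weyl-operator orthogonality argument and is arguably cleaner as a proof of the lemma in isolation; the paper's approach has the advantage that the operators $T_k$ and $L_j$ it constructs are not throwaway --- they are reused immediately in Lemma~\ref{lem:AB Rep} to write $B_f$ and $A_v$ as explicit sums of products of projections and shifts, which drives the rest of the ground-state analysis. As a side remark, your orthogonality computation also yields linear independence of the $X^kZ^j$, which is slightly more than the paper's proof extracts (the paper only needs that they span).
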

\begin{proof}
It suffices to prove $(ii)$, since $(ii) \Rightarrow (i)$. We first note that
\begin{equation}\label{Ts}
T_k:= \frac{1}{d}\sum_{j=0}^{d-1}(\omega^{-k} Z)^j
\end{equation}
are orthogonal projectors onto $l_k$, since
\begin{equation*}
T_k l_h= \frac{1}{d}\sum_{j=0}^{d-1}(\omega^{-k})^j(\omega^{h})^j l_h = \delta_{k,h}l_h,
\end{equation*}
and $T_k = T_k\str$. Furthermore, for $0\leq j\leq d-1$, let
\begin{equation}\label{Ls}
L_j:= X^j.
\end{equation}
Since $L_j l_i = l_{i+j}$, the matrices $\{L_{j-i}T_i\}_{i,j=0}^{d-1}$ form the canonical vector space basis of $\caM_d(\bbC)$:
\begin{equation*}
\langle l_k, L_{j-i} T_i l_h\rangle = \delta_{k,j}\delta_{i,h}.
\end{equation*}
This proves the claim since $L_{j-i}T_i$ is a linear combination of $X^{j-i}$ and $\{Z^k\}_{k=0}^{d-1}$.
\end{proof}
Since observables acting on different edges commute, the operators
\begin{equation*}
X_{\gamma\str} = \prod_{e\str\in\caG^1} X_{e\str}^{c_{e\str}},\qquad Z_\gamma := \prod_{e\in\caG_1} Z_{e}^{c_e},
\end{equation*}
are unequivocally defined for any $1$-cochain $\gamma\str = \sum_{e\str\in\caG^1}c_{e\str} e\str$ and $1$-chain $\gamma = \sum_{e\in\caG_1}c_e e$. In fact, they form a unitary group representation of the cochain group~$\caC^1(G;\bbZ_d)$, resp.~chain group~$\caC_1(G;\bbZ_d)$ on $\caH_G$ since
\begin{equation*}
X_{\gamma_1\str+ \gamma_2\str} = X_{\gamma_1\str}X_{\gamma_2\str},\qquad Z_{\gamma_1+ \gamma_2} = Z_{\gamma_1}Z_{\gamma_2}.
\end{equation*}
As noted earlier, the operators (\ref{litte ab}) are now naturally expressed as $b_f = Z_{\partial f}$ and $a_v = X_{\partial\str f\str}$, where $f\str$ is the face in $\caG^2$ dual to $v$. 
\begin{prop}\label{prop: all A span}
The set $\{X_{\gamma\str}Z_\gamma: \gamma\str\in\caC^1(G;\bbZ_d),\gamma\in \caC_1(G,\bbZ_d)\}$ linearly spans $\caA_G$.
\end{prop}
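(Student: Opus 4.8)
The plan is to reduce the statement to the single-edge result of Lemma~\ref{XZ_Basis} by exploiting the tensor product structure of $\caH_G$. First I would observe that, because $\caH_G = \otimes_{e\in\caG_1}\caH_e$ with each $\caH_e\cong\bbC^d$, the full operator algebra factorises as $\caA_G \cong \otimes_{e\in\caG_1}\caM_d(\bbC)$. A spanning set of a tensor product of algebras is obtained by taking all tensor products of elements drawn from spanning sets of the individual factors; hence it suffices to exhibit, on each single edge, a spanning family of $\caM_d(\bbC)$ of the required form. This is precisely part~(ii) of Lemma~\ref{XZ_Basis}: the ordered monomials $\{X_e^{k}Z_e^{j}:k,j\in\bbZ_d\}$ span $\caM_d(\bbC)$.

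Next I would assemble these local spanning sets into global ones. Since operators supported on distinct edges commute, the products $\prod_{e\in\caG_1} X_e^{k_e}Z_e^{j_e}$, taken with the convention that on each factor the power of $X_e$ stands to the left of the power of $Z_e$, run over a tensor product of the local spanning families as the multi-indices $(k_e)_{e}$ and $(j_e)_{e}$ range independently over $(\bbZ_d)^{|\caG_1|}$. By the previous paragraph this collection spans $\caA_G$.

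It then remains to identify this collection with the family $\{X_{\gamma\str}Z_\gamma\}$ appearing in the statement. Here I would use the one-to-one correspondence between edges and dual edges: each $e\str\in\caG^1$ crosses a unique $e\in\caG_1$ and carries the same spin, so $X_{e\str}$ acts as $X$ on the factor $\caH_e$. Consequently, for a cochain $\gamma\str=\sum_{e\str}c_{e\str}\,e\str$ and a chain $\gamma=\sum_e c_e\, e$, regrouping the two commuting products edge by edge gives $X_{\gamma\str}Z_\gamma = \prod_{e\in\caG_1}X_e^{c_{e\str}}Z_e^{c_e}$, which is exactly a monomial of the form above with $k_e=c_{e\str}$ and $j_e=c_e$. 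As $\gamma\str$ ranges over all $1$-cochains and $\gamma$ over all $1$-chains, the coefficients $(c_{e\str})_e$ and $(c_e)_e$ sweep out $(\bbZ_d)^{|\caG_1|}$ independently, so $\{X_{\gamma\str}Z_\gamma\}$ coincides with the spanning collection and therefore spans $\caA_G$.

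There is no genuine analytic obstacle; the only points that require care are bookkeeping ones. One must keep the ordering of $X_e$ and $Z_e$ on each edge consistent with the ordered family of Lemma~\ref{XZ_Basis} (these operators do not commute, by~(\ref{XZ})), and one must verify that the edge/dual-edge identification turns the free cochain and chain coefficients into free, independent local exponents. As a consistency check one may note the dimension count: there are $d^{|\caG_1|}\cdot d^{|\caG_1|}=d^{2|\caG_1|}$ pairs $(\gamma\str,\gamma)$, matching $\dim\caA_G=(d^2)^{|\caG_1|}$, so the spanning family has exactly the cardinality of a basis.
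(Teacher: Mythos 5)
Your proposal is correct and is essentially the paper's argument written out in full: the paper's one-line proof invokes exactly the same three ingredients (Lemma~\ref{XZ_Basis}(ii) on each edge, commutativity of operators on distinct edges, and the periodicity $X^d=Z^d=\Id$ that makes the exponents range over $\bbZ_d$). Your additional bookkeeping on the edge/dual-edge identification and the dimension count is a sound, if unneeded, elaboration.
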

\begin{proof}
This is an immediate consequence of Lemma~\ref{XZ_Basis}, the fact that $X^d = 1 = Z^d$, and the commutativity of operators acting on different sites.
\end{proof}
We shall denote $|v|$ the degree of the vertex, and similarly $\vert f\vert$ the number of edges around a face, which is equal to the degree of the corresponding dual vertex.
\begin{lemma}\label{lem:AB Rep}
With $T_k,L_j,\,0\leq j,k\leq d-1$ defined in~(\ref{Ts},\ref{Ls}), and for any $v\in\caG_0,f\in\caG_2$, the following relations hold:
\begin{equation*}
B_f = \sum_{\substack{0\leq k_1,\ldots, k_{| f |}\leq d-1: \\
\sum_{e\in f} \varepsilon(e) k_e = 0
}}  \prod_{e\in f}T_{e,k_e},\qquad 
A_v = \frac{1}{d}\sum_{j=0}^{d-1}\prod_{e\ni v}L_{e,\varepsilon(e) j},
\end{equation*}
where $T_{e,k_e},L_{e,\varepsilon(e) j}\in\caA_e$. Moreover $B_f$ and $A_v$ are orthogonal projectors, and
\begin{equation*}
[B_f,A_v] = 0.
\end{equation*}
\end{lemma}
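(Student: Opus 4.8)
The plan is to prove the two operator identities by direct computation, read off the projector property from unitarity, and reduce the commutation relation to the single homological identity $\partial^2=0$. For $A_v$ the first identity is immediate: since operators on distinct edges commute and $L_{e,m}=X_e^m$, one has $(a_v)^j=\prod_{e\ni v}X_e^{\varepsilon(e)j}=\prod_{e\ni v}L_{e,\varepsilon(e)j}$, and substituting into the definition~(\ref{vertex and face operators}) of $A_v$ gives the claim. For $B_f$ I would use the spectral decomposition $Z^m=\sum_{k=0}^{d-1}\omega^{mk}T_k$, which follows from the fact (established in the proof of Lemma~\ref{XZ_Basis}) that $T_k$ is the projector onto $l_k$. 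Writing $Z_e^{\varepsilon(e)j}=\sum_{k_e}\omega^{\varepsilon(e)jk_e}T_{e,k_e}$ and expanding the product over $e\in f$ produces an overall phase $\omega^{\,j\sum_{e\in f}\varepsilon(e)k_e}$ multiplying $\prod_{e\in f}T_{e,k_e}$. Averaging over $j$ and invoking the character orthogonality $\tfrac1d\sum_{j=0}^{d-1}\omega^{jS}=1$ if $S\equiv 0\ (\mathrm{mod}\ d)$ and $0$ otherwise eliminates every tuple except those with $\sum_{e\in f}\varepsilon(e)k_e=0$, which is exactly the stated formula.

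The projector property follows from a general fact. Both $a_v$ and $b_f$ are products of the unitaries $X_e^{\pm1},Z_e^{\pm1}$, hence unitary, and since $X^d=Z^d=1$ they satisfy $(a_v)^d=(b_f)^d=1$. For any unitary $U$ with $U^d=1$, the average $P=\tfrac1d\sum_{j=0}^{d-1}U^j$ is the orthogonal projection onto $\ker(U-1)$: self-adjointness follows from $U^*=U^{-1}$ after reindexing $j\mapsto -j$, while $P^2=P$ follows from $\sum_{j'}U^{j+j'}=dP$ for each fixed $j$. Applying this to $U=a_v$ and $U=b_f$ shows that $A_v$ and $B_f$ are orthogonal projectors.

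The commutativity is the only substantial point. Since $A_v$ and $B_f$ are polynomials in $a_v$ and $b_f$ respectively, it suffices to prove $a_v b_f=b_f a_v$. Using the notation $a_v=X_{\partial\str f\str}$ and $b_f=Z_{\partial f}$, I would first record the general rule that for any $1$-cochain $\gamma\str$ and $1$-chain $\gamma$,
\begin{equation*}
X_{\gamma\str}Z_\gamma=\omega^{-\langle\gamma\str,\gamma\rangle}Z_\gamma X_{\gamma\str},\qquad \langle\gamma\str,\gamma\rangle:=\sum_{e}c_{e\str}c_e,
\end{equation*}
which is obtained edge by edge from $X^aZ^b=\omega^{-ab}Z^bX^a$, itself a direct consequence of~(\ref{XZ}). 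It then remains only to evaluate the pairing $\langle\partial\str f\str,\partial f\rangle$. The coefficient of the dual edge $e\str$ in $\partial\str f\str$ is the vertex sign $\varepsilon(e)$, which by the sign convention in~(\ref{litte ab}) is precisely the coefficient of $v$ in $\partial e$; hence, writing $[\,\cdot:v]$ for the coefficient of the vertex $v$,
\begin{equation*}
\langle\partial\str f\str,\partial f\rangle=\sum_{e\in f}\varepsilon(e)\,[\partial e:v]=\Big[\partial\Big(\sum_{e\in f}\varepsilon(e)\,e\Big):v\Big]=[\partial^2 f:v]=0.
\end{equation*}
The phase is therefore $\omega^0=1$, so $a_v$ and $b_f$ commute. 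The main obstacle is exactly this identification: recognising that the sign convention built into~(\ref{litte ab}) makes the commutator phase equal to the coefficient of $v$ in $\partial^2 f$, so that commutativity is nothing but the statement $\partial^2=0$. A more hands-on alternative is to observe that the shared edges are the (generically two) edges of $\partial f$ meeting at $v$ and to check that their signed contributions cancel; the $\partial^2=0$ argument has the advantage of also covering faces on whose boundary the vertex $v$ appears more than once.
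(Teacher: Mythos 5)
Your proof is correct. The two operator identities and the projector property follow the paper's own computations almost verbatim (spectral decomposition $Z^m=\sum_k\omega^{mk}T_k$ plus the character sum over $j$ for $B_f$; the trivial rewriting for $A_v$), with your group-averaging lemma for unitaries with $U^d=\Id$ being a slightly cleaner packaging of the same self-adjointness and idempotence computations. Where you genuinely diverge is the commutator: the paper argues concretely that a face shares exactly two edges with any of its vertices, one with matching and one with opposite orientation, so the two non-commuting factors contribute $\omega\cdot\omega^{-1}=1$; you instead prove the general exchange rule $X_{\gamma\str}Z_\gamma=\omega^{-\langle\gamma\str,\gamma\rangle}Z_\gamma X_{\gamma\str}$ and identify $\langle\partial\str f\str,\partial f\rangle$ with the coefficient of $v$ in $\partial^2f$. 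This buys two things: it is robust for polygon decompositions in which a vertex occurs more than twice along the boundary of a face (a case the paper's ``exactly two shared edges'' claim quietly excludes), and it makes explicit that the commutativity of the stabilisers is literally the identity $\partial^2=0$ --- which is the same mechanism the paper itself invokes later, via equation~(\ref{ZaCommutation}) and the relation $\sum_{e\ni v}\varepsilon(e)c_e=0$ for cycles, in the proof of Theorem~\ref{thm:GS Alg}. So your route is a mild but worthwhile generalisation that unifies this lemma with the later cycle/cocycle commutation arguments.
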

\begin{proof}
We first note that $Z = \sum_{j=0}^{d-1}\omega^j T_j$, and that $Z^k = \sum_{j=0}^{d-1}\omega^{kj} T_j$ since $T_j T_k = \delta_{j,k} T_k$.  Then,
\begin{equation*}
B_f = \frac{1}{d}\sum_{j=0}^{d-1}\sum_{0\leq k_1,\ldots, k_{| f |}\leq d-1}\omega^{j \sum_{e\in f} \varepsilon(e) k_e}\prod_{e\in f}T_{e,k_e},
\end{equation*}
which yields the claim by carrying out the sum over $j$. The self-adjointness of $B_f$ now follows from that of $T_k$ and the fact that the $T_{k_i}$'s in the product commute as they act on different vertices. Finally, $B_f$ is a projector since $\{T_k\}_{k=0}^{d-1}$ are mutually orthogonal projections.

The representation of $A_v$ follows from $(X\str)^j = X^{-j} = L_{-j}$. $A_v\str$ involves a product over $L_{e,-\varepsilon(e)j}$, and their sum over $j$ yields $A_v$ again. Furthermore,
\begin{equation*}
A_v^2 = \frac{1}{d^2}\sum_{j,j' = 0}^{d-1}\prod_{e\ni v}L_{e,\varepsilon(e)(j+j')}
\end{equation*}
and since the summation is modulo $d$, the sum $j+j'$ runs over $0$ to $d-1$ for any fixed $j$, so that $A_v^2 = A_v$.

Finally, if $v,f$ share no edge at all, then the commutation $[B_f,A_v]=0$ is trivial. In $G$, a face shares exactly two edges with any of its vertices, one of them, $e$, having the same orientation in both, while the other, $e'$, has the opposite orientation with respect to the vertex as with respect to the face. Hence, the a priori non-commuting factors in the product $b_f a_v$ are given by
\begin{equation*}
Z_e^{\varepsilon(e)}Z_{e'}^{\varepsilon(e')} X_e^{\varepsilon(e)}X_{e'}^{-\varepsilon(e')} = \omega \omega^{-1}X_e^{\varepsilon(e)}X_{e'}^{-\varepsilon(e')}Z_e^{\varepsilon(e)}Z_{e'}^{\varepsilon(e')} 
\end{equation*}
by~(\ref{XZ}), which yields the claim.
\end{proof}
In order to keep the notation as simple as possible, we shall write $T_{k_e}$ instead of $T_{e,k_e}$ in the following and similarly for the $L$ operators.

The Hamiltonian is thus a sum of commuting orthogonal projections, so that its spectrum is
\begin{equation*}
\mathrm{Spec}(H_G) \subset \left\{0,-1,-2,\ldots,-\left(\vert \caG_0\vert + \vert \caG_2\vert\right)\right\},
\end{equation*}
and its ground state space~$\caS_{G}$ is characterised by the following condition:
\begin{equation*}
\caS_{G}=\left\{\psi\in\caH_G: B_f\psi = \psi, A_v\psi = \psi,\,\forall f\in\caG_2,v\in\caG_0\right\}.
\end{equation*}
This is usually referred to as the \emph{frustration-free} property, namely the fact that any global ground state is also a ground state of all the local interaction terms. We also note that $\caS_{G}$ is not empty as will be proved in Theorem~\ref{thm:SBraid}, with an explicit basis later given in Proposition~\ref{ExplicitGS}.
\begin{lemma}\label{lem:little ab}
$\caS_{G}$ is equivalently characterised by
\begin{equation*}
\caS_{G}=\left\{\psi\in\caH_G: b_f\psi = \psi, a_v\psi = \psi,\,\forall f\in\caG_2,v\in\caG_0\right\}.
\end{equation*}
where $a_v,b_f$ were defined in~(\ref{litte ab}).
\end{lemma}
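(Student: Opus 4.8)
The plan is to recognise each $A_v$ and $B_f$ as the orthogonal projection onto the fixed-point subspace of the corresponding unitary $a_v$ or $b_f$, so that the two sets of constraints become manifestly equivalent. The single structural input I would use is that $X^d = Z^d = 1$, whence $a_v^d = b_f^d = 1$ by~(\ref{litte ab}); consequently $A_v = \frac{1}{d}\sum_{j=0}^{d-1}(a_v)^j$ and $B_f = \frac{1}{d}\sum_{j=0}^{d-1}(b_f)^j$ are the averaging projections associated with the finite cyclic groups generated by $a_v$ and $b_f$ respectively.

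For one inclusion I would argue directly: if $a_v\psi = \psi$ for all $v\in\caG_0$ and $b_f\psi = \psi$ for all $f\in\caG_2$, then $A_v\psi = \frac{1}{d}\sum_{j=0}^{d-1}(a_v)^j\psi = \psi$ and likewise $B_f\psi = \psi$, so $\psi\in\caS_G$. For the converse, the key step is the telescoping identity $a_v A_v = A_v$, obtained by reindexing the averaged sum and using $a_v^d = 1$:
\begin{equation*}
a_v A_v = \frac{1}{d}\sum_{j=0}^{d-1}(a_v)^{j+1} = \frac{1}{d}\sum_{j=0}^{d-1}(a_v)^{j} = A_v,
\end{equation*}
and identically $b_f B_f = B_f$. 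Then, whenever $A_v\psi = \psi$, one gets $a_v\psi = a_v A_v\psi = A_v\psi = \psi$, and the same reasoning yields $b_f\psi = \psi$ from $B_f\psi = \psi$. Combining the two inclusions gives the asserted equality of the two descriptions of $\caS_G$.

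I expect no genuine obstacle here: the content is simply that averaging a finite-order unitary produces the projector onto its unit eigenspace, so the fixed-point conditions for $a_v, b_f$ and for $A_v, B_f$ coincide. The only point requiring a moment's care is the reindexing modulo $d$, which is precisely the manipulation already used in Lemma~\ref{lem:AB Rep} to establish $A_v^2 = A_v$.
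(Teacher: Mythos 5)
Your proof is correct, but it takes a genuinely different and more economical route than the paper's. For the nontrivial direction ($B_f\psi=\psi \Rightarrow b_f\psi=\psi$) the paper invokes the explicit representation of $B_f$ from Lemma~\ref{lem:AB Rep} as a sum of products of the rank-one projectors $T_{k_e}$ subject to $\sum_{e\in f}\varepsilon(e)k_e=0$: expanding $\psi$ in the tensor-product basis shows that only coefficients satisfying this constraint survive, on which $b_f$ acts as multiplication by $\omega^{0}=1$; the $A_v$ case is then handled by the analogous expansion in the eigenbasis of $X$. You instead use only the absorption identity $a_vA_v=A_v$ (and $b_fB_f=B_f$), obtained by reindexing the average modulo $d$ via $a_v^d=b_f^d=\bbI$ --- the standard fact that averaging a finite-order unitary over the cyclic group it generates yields the projection onto its fixed-point space. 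This makes the equivalence of the two eigenvalue conditions immediate, requires no basis expansion and none of the spectral decomposition of Lemma~\ref{lem:AB Rep}, and generalises verbatim to any unitary of finite order. What the paper's route buys in exchange is the explicit description of which tensor-product coefficients of a ground state can be nonzero (the constraint $\sum_{e\in f}\varepsilon(e)k_e=0$), which is in the spirit of the loop-configuration picture exploited later for the entanglement entropy; your argument, being purely algebraic, does not produce that byproduct but is the cleaner proof of the stated equivalence.
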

\begin{proof}
It suffices to prove that $B_f\psi = \psi$ is equivalent to $b_f\psi = \psi$, and similarly for $A_v$ operators. If $\psi$ is an eigenvector of $b_f$ for the eigenvalue $1$, then the definition~(\ref{vertex and face operators}) implies that $B_f\psi = d^{-1}\sum_{j=0}^{d-1}\psi = \psi$. Reciprocally, if $B_f\psi = \psi$, then the only non-zero coefficients $\psi_{k_1,\ldots, k_{| f |}}$ of $\psi$ in the tensor product basis $l_{k_1}\otimes\cdots\otimes l_{k_{| f |}}$ in the face $f$ are such that $\sum_{e\in f} \varepsilon(e) k_e = 0$ by Lemma~\ref{lem:AB Rep}. It follows that
\begin{equation*}
b_f\psi = \sum_{0\leq k_1,\ldots, k_{| f |}\leq d-1} \psi_{k_1,\ldots, k_{| f |}} \omega^{\sum_{e\in f} \varepsilon(e) k_e} l_{k_1}\otimes\cdots\otimes l_{k_{| f |}} = \psi.
\end{equation*}
The same holds for the $A_v$ operators by expanding $\psi$ in the tensor product eigenbasis of $X$.
\end{proof}

In the sequel, a (co)cycle shall refer to a $1$-(co)cycle if not otherwise specified.
\begin{thm}\label{thm:GS Alg}
$X_{\gamma\str}Z_\gamma\caS_G\subset\caS_G$ if and only if $\gamma$ is a cycle and $\gamma\str$ is a cocycle, and $\caA(\caS_G)$ is linearly spanned by $\left\{X_{\gamma\str}Z_\gamma: \gamma\str\in\caL^1(G;\bbZ_d),\gamma\in\caL_1(G;\bbZ_d)\right\}$.
\end{thm}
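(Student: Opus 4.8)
The plan is to run everything through the commutation relations between the stabilizers and the operators $X_{\gamma\str}Z_\gamma$. Writing $a_v = X_{\partial\str f\str}$ and $b_f = Z_{\partial f}$ as in the text, and using only $ZX = \omega XZ$ together with the commutativity of operators on distinct edges, I would first establish the two identities
\[
b_f\, X_{\gamma\str}Z_\gamma = \omega^{(\partial\str\gamma\str)_f}\, X_{\gamma\str}Z_\gamma\, b_f,\qquad a_v\, X_{\gamma\str}Z_\gamma = \omega^{-(\partial\gamma)_v}\, X_{\gamma\str}Z_\gamma\, a_v,
\]
where $(\partial\gamma)_v$ is the coefficient of $v$ in the boundary $\partial\gamma$ and $(\partial\str\gamma\str)_f$ the coefficient of the dual vertex $f\str$ in the coboundary $\partial\str\gamma\str$. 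The computation is local: $b_f$ commutes with $Z_\gamma$ and sweeps a phase $\omega^{\varepsilon(e)c_{e\str}}$ past $X_{\gamma\str}$ on each edge $e\in f$, whose sum over $e\in f$ is exactly the incidence pairing defining $\partial\str\gamma\str$ at $f$; symmetrically for $a_v$ against $Z_\gamma$. The only genuinely geometric input is that the right-handed trihedron convention makes the per-edge $X$--$Z$ phase coincide with the incidence number, so that the accumulated phase vanishes for all $f$ precisely when $\gamma\str$ is a cocycle and for all $v$ precisely when $\gamma$ is a cycle.

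Granting these relations, the equivalence in the first assertion is immediate. If $\gamma\str$ is a cocycle and $\gamma$ a cycle, all phases equal $1$, so for $\psi\in\caS_G$ (characterised by $a_v\psi = \psi = b_f\psi$ through Lemma~\ref{lem:little ab}) one gets $b_f X_{\gamma\str}Z_\gamma\psi = X_{\gamma\str}Z_\gamma\psi$ and likewise for $a_v$, whence $X_{\gamma\str}Z_\gamma\psi\in\caS_G$. Conversely, fixing any $\psi\in\caS_G\setminus\{0\}$ (the ground space being non-empty, as recorded before Lemma~\ref{lem:little ab} and established in Theorem~\ref{thm:SBraid}), the hypothesis $X_{\gamma\str}Z_\gamma\psi\in\caS_G$ forces $\omega^{(\partial\str\gamma\str)_f}X_{\gamma\str}Z_\gamma\psi = X_{\gamma\str}Z_\gamma\psi$; since $X_{\gamma\str}Z_\gamma$ is unitary this vector is non-zero, so $(\partial\str\gamma\str)_f\equiv 0 \pmod d$ for every $f$, i.e.\ $\gamma\str$ is a cocycle, and symmetrically $\gamma$ is a cycle.

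For the spanning statement I would exploit that, by Proposition~\ref{prop: all A span} together with Lemma~\ref{XZ_Basis} (so that $\{X^kZ^j\}_{k,j\in\bbZ_d}$ is a basis of $\caM_d(\bbC)$ on each edge), the family $\{X_{\gamma\str}Z_\gamma : \gamma\str\in\caC^1(G;\bbZ_d),\,\gamma\in\caC_1(G;\bbZ_d)\}$ is in fact a \emph{basis} of $\caA_G$, and that by the relations above each member is a joint eigenvector of the mutually commuting adjoint actions $\mathrm{Ad}(a_v),\mathrm{Ad}(b_f)$, with joint eigenvalue $1$ exactly on the cocycle/cycle pairs. Decomposing $\caH_G = \bigoplus_\chi \caH_\chi$ into the joint eigenspaces (the superselection sectors) of the commuting stabilizers $\{a_v,b_f\}$, with $\caS_G = \caH_0$ the sector of the trivial character, each $X_{\gamma\str}Z_\gamma$ carries $\caH_0$ into the single sector indexed by its character. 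The cocycle/cycle operators are precisely those that fix every sector, so they already lie in $\caA(\caS_G)$ and span the commutant $\{a_v,b_f\}'$; the reverse inclusion, that nothing else belongs, is the delicate point.

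The routine-but-fiddly part is the orientation bookkeeping of the first paragraph. The genuine obstacle is this last inclusion: the condition ``$O\caS_G\subset\caS_G$'' controls only the image of $\caH_0$, and because several distinct pairs $(\gamma\str,\gamma)$ share the same character (they differ by a cocycle/cycle), their restrictions to $\caS_G$ are linearly dependent, so a naive term-by-term reading of the sector-shifting components does \emph{not} force the corresponding coefficients to vanish. The clean way around this is to read $\caA(\caS_G)$ as the operators commuting with all stabilizers --- equivalently, preserving the entire sector decomposition rather than $\caS_G$ alone --- for which the joint-eigenoperator basis yields the identification with the cocycle/cycle span at once; reconciling this with the literal set-theoretic condition is where the argument must be pinned down, using the non-triviality of each sector and the frustration-free structure.
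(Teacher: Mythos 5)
Your treatment of the first assertion is correct and follows essentially the paper's route: both arguments rest on intertwining $X_{\gamma\str}Z_\gamma$ with the stabilizers, the paper phrasing this through the spectral projections $A_v(j),B_f(j)$ of $a_v,b_f$ and you through the phases $\omega^{-(\partial\gamma)_v}$, $\omega^{(\partial\str\gamma\str)_f}$ swept past $a_v,b_f$. Your eigenvalue argument for the ``only if'' direction is a clean substitute for the paper's matrix-element computation $\langle\phi, A_v(0)A_v(\pm c_{e_0})Z_\gamma\psi\rangle=0$, and you correctly record that it needs $\caS_G\neq\{0\}$.

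The gap is the spanning statement, which you flag but do not close. Two remarks finish it. First, your own relations give more than non-preservation: if $(\partial\gamma)_v\neq 0$ for some $v$, then for $\phi,\psi\in\caS_G$ one has $\langle\phi,X_{\gamma\str}Z_\gamma\psi\rangle=\langle\phi,X_{\gamma\str}Z_\gamma a_v\psi\rangle=\omega^{(\partial\gamma)_v}\langle a_v\str\phi,X_{\gamma\str}Z_\gamma\psi\rangle=\omega^{(\partial\gamma)_v}\langle\phi,X_{\gamma\str}Z_\gamma\psi\rangle$, hence $P_{\caS_G}X_{\gamma\str}Z_\gamma P_{\caS_G}=0$ whenever $(\gamma\str,\gamma)$ is not a (cocycle, cycle) pair. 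This orthogonality --- which is what the paper actually establishes, writing $X_{\gamma\str}Z_\gamma\caS_G\perp\caS_G$ --- is the statement you need, not merely that $\caS_G$ fails to be preserved. Second, you are right that under the literal definition $\caA(\caS_G)=\{O:O\caS_G\subset\caS_G\}$ the spanning claim would fail (e.g.\ $(\bbI-P_{\caS_G})O(\bbI-P_{\caS_G})$ preserves $\caS_G$ trivially for arbitrary $O$), but the intended meaning, consistent with its later use in Theorem~\ref{thm:SBraid}(ii), is the algebra of operators \emph{induced on} $\caS_G$, i.e.\ the compressions $P_{\caS_G}OP_{\caS_G}$ viewed as operators on $\caS_G$. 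With that reading the proof is one line and is the paper's: expand any $O\in\caA_G$ in the basis of Proposition~\ref{prop: all A span} and compress; the orthogonality kills every term that is not a (cocycle, cycle) pair, and the surviving compressions are exactly the restrictions $X_{\gamma\str}Z_\gamma\upharpoonright\caS_G$. The sector decomposition of $\caH_G$ and the identification with the commutant $\{a_v,b_f\}'$ that you invoke are correct but are stronger statements than required, and insisting on reconciling them with the set-theoretic condition is precisely what left your argument open.
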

\begin{proof}
For $0\leq j\leq d-1$, let
\begin{equation*}
B_f(j) := \sum_{\substack{0\leq k_1,\ldots, k_{| f |}\leq d-1: \\
\sum_{e\in f} \varepsilon(e) k_e = j
}} \prod_{e\in f}T_{k_e},\qquad 
A_v(j) := \frac{1}{d}\sum_{k=0}^{d-1}\omega^{kj}\prod_{e\ni v}L_{\varepsilon(e) k}.
\end{equation*}
so that $B_f = B_f(0)$ and $A_v = A_v(0)$. First of all, $B_f(j)B_f(k) = \delta_{j,k} B_f(j)$ by the orthogonality of the projectors $T_k$. Secondly,
\begin{equation*}
A_v(j)A_v(k) = \frac{1}{d^2}\sum_{p,q=0}^{d-1}\omega^{jp + kq}\prod_{e\ni v}L_{\varepsilon(e)(p+q)} = A_v(j)\frac{1}{d}\sum_{p}\omega^{(k-j)p} = \delta_{j,k} A_v(j).
\end{equation*}
Furthermore, if $v$ and $\gamma$ share just one edge $e_0$, and it appears with coefficient $c_{e_0}\neq 0$ in $\gamma$, then
\begin{equation*}
A_v(j) Z_\gamma = Z_\gamma A_v(j + \varepsilon(e_0)c_{e_0})
\end{equation*}
and similarly $B_f(j) X_{\gamma\str} = X_{\gamma\str}B_f(j+ \varepsilon(e_0)c_{e_0})$. If $\gamma$ is not a cycle, namely $\gamma\in \caC_1(G;\bbZ_d)\setminus \caL_1(G;\bbZ_d)$, then it has a boundary, namely there is an edge $e_0\in\partial\gamma$ and a vertex $v\in e_0$ such that $v\in\gamma$ and $\{e'\ni v: e'\neq e_0\}\cap\gamma = \emptyset$. Then,
for any $\psi,\phi\in\caS_G$,
\begin{equation*}
\langle\phi, Z_{\gamma}\psi\rangle = \langle\phi, A_v Z_\gamma A_v \psi\rangle = \langle\phi, A_v(0)A_v(\pm c_{e_0})  Z_\gamma \psi\rangle = 0
\end{equation*}
and similarly for the action of $X_{\gamma\str}$ with $B_f$ for a suitably chosen face $f$. Hence, if either $\gamma$ or $\gamma\str$ is not a cycle, then $X_{\gamma\str}Z_\gamma\caS_G \perp\caS_G$.

If on the other hand $\gamma\in\caL_1(G;\bbZ_d),\gamma\str\in\caL^1(G;\bbZ_d)$, then $[A_v,X_{\gamma\str}Z_{\gamma}] = 0 = [B_f,X_{\gamma\str}Z_{\gamma}]$ for all $v\in\caG^{0},f\in\caG^{2}$. This follows from the fact that any $A_v$ commutes with any $X_{\gamma\str}$, and that the vertex $v$ must share an even number of edges with $\gamma$ with opposite orientations. Precisely, $\partial \gamma = 0$ implies that if $\gamma = \sum c_e e$, then for any $v\in\caG_0$, $\sum_{e\ni v} \varepsilon(e) c_e = 0$. Since, moreover,
\begin{equation}\label{ZaCommutation}
\prod_{e\ni v}Z_e^{c_e}\prod_{e\ni v}X_e^{\varepsilon(e)} = \omega^{\sum_{e\ni v}c_e\varepsilon(e)}\prod_{e\ni v}X_e^{\varepsilon(e)} \prod_{e\ni v}Z_e^{c_e},
\end{equation}
it follows that $[Z_\gamma,a_v] = 0$ and hence $[Z_\gamma,A_v] = 0$. Similarly, any $B_f$ commutes with any $Z_\gamma$, and the face $f$ shares an even number of edges with $\gamma\str$ with opposite orientations, so that $[B_f, X_{\gamma\str}]=0$. Hence if $\psi\in\caS_G$, then $\phi = X_{\gamma\str}Z_{\gamma}\psi$ is an eigenvector for all $A_v,B_f$ for the eigenvalue $1$, so that $\phi\in\caS_G$.

The last claim follows from the above and Proposition~\ref{prop: all A span}.
\end{proof}

Before we pursue, we note that Theorem~\ref{thm:GS Alg} can in particular be seen as a mathematical statement of Property~(i) of topological order mentioned in the introduction, namely the absence of any local order parameter. We shall call an observable $O\in\caA_G$ \emph{local} if all (co)cycles within $\mathrm{supp}(O)$ are (co)boundaries, namely $\mathrm{supp}(O)$ does not wind around any hole of the torus.
\begin{cor}
Let $O$ be a local observable. Then there exists $c(O)\in\bbC$ such that
\begin{equation}\label{LTQO}
P_{\caS_G} O P_{\caS_G} = C(O)P_{\caS_G}.
\end{equation}
In particular, $\langle \Psi, \Psi\rangle^{-1} \langle \Psi, O\Psi\rangle$ is independent of $\Psi\in\caS_G$.
\end{cor}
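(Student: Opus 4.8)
The plan is to reduce the corollary to the spanning result of Theorem~\ref{thm:GS Alg} by showing that a local observable, when compressed to $\caS_G$, behaves as a scalar. First I would recall that by Proposition~\ref{prop: all A span}, any $O\in\caA_G$ can be written as a linear combination
\begin{equation*}
O = \sum_{\gamma\str,\gamma} \lambda_{\gamma\str,\gamma}\, X_{\gamma\str}Z_\gamma,
\end{equation*}
with the sum running over $1$-cochains $\gamma\str$ and $1$-chains $\gamma$. Applying the compression $P_{\caS_G}(\cdot)P_{\caS_G}$ and using Theorem~\ref{thm:GS Alg}, every term with either $\gamma$ not a cycle or $\gamma\str$ not a cocycle maps $\caS_G$ into its orthogonal complement, hence is annihilated by the compression. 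Thus only terms indexed by genuine cycles and cocycles survive.

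Next I would bring in the locality hypothesis. Since $X_{\gamma\str}Z_\gamma$ acts nontrivially only on edges in its support, and the support of each surviving term is contained in $\mathrm{supp}(O)$, locality forces every surviving $\gamma$ to be a cycle supported in a region where all cycles are boundaries, so $\gamma\in\caB_1(G;\bbZ_d)$, and likewise $\gamma\str\in\caB^1(G;\bbZ_d)$. The key step is then to show that whenever $\gamma = \partial c$ is a boundary, $Z_\gamma$ acts as the identity on $\caS_G$: indeed $Z_{\partial f} = b_f$ by the earlier identification, and since $b_f\psi = \psi$ for all $\psi\in\caS_G$ by Lemma~\ref{lem:little ab}, linearity and the homomorphism property $Z_{\gamma_1+\gamma_2} = Z_{\gamma_1}Z_{\gamma_2}$ give $Z_\gamma\psi = \psi$ for every boundary $\gamma$. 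The dual argument, using $a_v = X_{\partial\str f\str}$, shows $X_{\gamma\str}\psi=\psi$ for every coboundary $\gamma\str$. Consequently each surviving term compresses to $\lambda_{\gamma\str,\gamma}P_{\caS_G}$, and summing the scalars yields $P_{\caS_G}OP_{\caS_G} = C(O)P_{\caS_G}$ with $C(O) = \sum \lambda_{\gamma\str,\gamma}$ over the boundary/coboundary indices.

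I expect the main obstacle to be the precise formulation of the locality step: one must be careful that "all cycles within $\mathrm{supp}(O)$ are (co)boundaries" is exactly the condition guaranteeing that a cycle $\gamma$ supported in $\mathrm{supp}(O)$ is a boundary \emph{of the full complex}, not merely of a subcomplex. The cleanest route is to note that the support condition means $\mathrm{supp}(O)$ contains no homologically nontrivial loop, so any cycle $\gamma$ with $\mathrm{supp}(\gamma)\subseteq\mathrm{supp}(O)$ represents the trivial class in $H_1(G;\bbZ_d)$, hence lies in $\caB_1(G;\bbZ_d)$; the same for $\gamma\str$ in cohomology. Once this is established the rest is routine bookkeeping.

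Finally, the last sentence of the corollary is immediate: for any $\Psi\in\caS_G$ we have $O\Psi$ need not lie in $\caS_G$, but $\langle\Psi,O\Psi\rangle = \langle\Psi, P_{\caS_G}OP_{\caS_G}\Psi\rangle = C(O)\langle\Psi,\Psi\rangle$, so the normalised expectation $\langle\Psi,\Psi\rangle^{-1}\langle\Psi,O\Psi\rangle = C(O)$ is the same constant for every ground state, which is precisely the absence of a local order parameter.
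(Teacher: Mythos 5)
Your proposal is correct and follows essentially the same route as the paper: expand $O$ in the $X_{\gamma\str}Z_\gamma$ basis via Proposition~\ref{prop: all A span}, discard non-(co)cycle terms using Theorem~\ref{thm:GS Alg}, invoke locality to conclude the surviving (co)cycles are (co)boundaries, and use that $Z_\gamma$ (resp.\ $X_{\gamma\str}$) acts as the identity on $\caS_G$ for (co)boundaries --- the fact established in the proof of Lemma~\ref{lem: GS Classes}, which you correctly re-derive from $b_f = Z_{\partial f}$ and Lemma~\ref{lem:little ab}. The locality subtlety you flag is already resolved by the paper's definition of a local observable, which requires cycles supported in $\mathrm{supp}(O)$ to be boundaries of the full complex.
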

\begin{proof}
Since $O\in\caA_G$, $O = \sum_{\gamma\in \caC_1(G;\bbZ_d),\gamma\str\in \caC^1(G;\bbZ_d)} c_{\gamma,\gamma\str}X_{\gamma\str}Z_\gamma$ by Proposition~\ref{prop: all A span}. Theorem~\ref{thm:GS Alg} further implies that $P_{\caS_G} OP_{\caS_G} = \sum_{\gamma\in \caL_1(G;\bbZ_d),\gamma\str\in \caL^1(G;\bbZ_d)} c_{\gamma,\gamma\str}P_{\caS_G}X_{\gamma\str}Z_\gamma P_{\caS_G}$. Since $O$ is local, the only non-zero coefficients $c_{\gamma,\gamma\str}$ correspond to (co)boundaries, for which $P_{\caS_G}X_{\gamma\str}Z_\gamma P_{\caS_G} = P_{\caS_G}$. This yields~(\ref{LTQO}) with $C(O) = \sum_{\gamma\in \caL_1(G;\bbZ_d),\gamma\str\in \caL^1(G;\bbZ_d)} c_{\gamma,\gamma\str}$.
\end{proof}

\begin{lemma}\label{lem: GS Classes}
Let $\gamma,{\gamma'}\in\caL_1(G;\bbZ_d)$. If $[\gamma] = [\gamma']$, then
\begin{equation*}
Z_\gamma\upharpoonright{\caS_G} = Z_{\gamma'}\upharpoonright{\caS_G}.
\end{equation*}
A similar statement holds for cocycles and the $X$ operators.
\end{lemma}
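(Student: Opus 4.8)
The plan is to reduce the statement to the single fact, already recorded in Lemma~\ref{lem:little ab}, that the face operators $b_f$ act as the identity on $\caS_G$, and to exploit the homomorphism property $Z_{\gamma_1+\gamma_2}=Z_{\gamma_1}Z_{\gamma_2}$ established before Proposition~\ref{prop: all A span}. The hypothesis $[\gamma]=[\gamma']$ means precisely that $\gamma$ and $\gamma'$ are homologous, i.e.\ they differ by a $1$-boundary: there exists a $2$-chain $c=\sum_{f\in\caG_2}c_f\,f\in\caC_2(G;\bbZ_d)$ with $\gamma-\gamma'=\partial c$. So the first step is to write $\gamma = \gamma' + \partial c$ and use the representation property to factor
\begin{equation*}
Z_\gamma = Z_{\gamma'+\partial c} = Z_{\gamma'}\,Z_{\partial c}.
\end{equation*}

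The second step is to identify $Z_{\partial c}$ with a product of face operators. By linearity of $\partial$ we have $\partial c=\sum_{f\in\caG_2}c_f\,\partial f$, so again by the homomorphism property and the identification $b_f=Z_{\partial f}$ noted before Proposition~\ref{prop: all A span},
\begin{equation*}
Z_{\partial c} = \prod_{f\in\caG_2} Z_{\partial f}^{\,c_f} = \prod_{f\in\caG_2} b_f^{\,c_f}.
\end{equation*}
Now let $\psi\in\caS_G$. By Lemma~\ref{lem:little ab} every $b_f$ fixes $\psi$, and since each $b_f$ preserves $\caS_G$ the factors can be applied successively without leaving the ground state space, giving $Z_{\partial c}\psi=\psi$. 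Combining with the factorization of the first step yields $Z_\gamma\psi = Z_{\gamma'}Z_{\partial c}\psi = Z_{\gamma'}\psi$ for every $\psi\in\caS_G$, which is exactly $Z_\gamma\upharpoonright\caS_G = Z_{\gamma'}\upharpoonright\caS_G$.

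The statement for cocycles and the $X$ operators is entirely dual and I would treat it by the same argument: if $[\gamma\str]=[\gamma'\str]$ in $H^1(G;\bbZ_d)=H_1(G\str;\bbZ_d)$, then $\gamma\str-\gamma'\str$ is a coboundary $\partial\str c\str$ for a dual $2$-chain $c\str$, and using $a_v=X_{\partial\str f\str}$ together with $a_v\psi=\psi$ on $\caS_G$ (Lemma~\ref{lem:little ab}) gives $X_{\gamma\str}\upharpoonright\caS_G=X_{\gamma'\str}\upharpoonright\caS_G$. There is no serious obstacle here; the only point requiring care is the bookkeeping that turns the topological relation $\gamma-\gamma'=\partial c$ into the operator identity $Z_{\partial c}=\prod_f b_f^{c_f}$, which rests on invoking linearity of $\partial$ and the representation property in the correct order, and on the fact (established in Theorem~\ref{thm:GS Alg} and Lemma~\ref{lem:little ab}) that the $b_f$ leave $\caS_G$ invariant so the product may be applied termwise.
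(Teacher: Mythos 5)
Your proof is correct and follows essentially the same route as the paper: write $\gamma-\gamma'$ as a boundary $\partial c$, identify $Z_{\partial c}$ with $\prod_f b_f^{c_f}$, and invoke Lemma~\ref{lem:little ab} to conclude that this product acts as the identity on $\caS_G$ (the paper establishes the operator identity by an induction on faces sharing an edge, whereas you obtain it directly from the homomorphism property $Z_{\gamma_1+\gamma_2}=Z_{\gamma_1}Z_{\gamma_2}$ and linearity of $\partial$ --- a cosmetic difference only).
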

\begin{proof}
Let $\beta$ be a 1-boundary, namely $\beta=\partial \left(\sum_{i=1}^N c_if_i\right), c_i\in\bbZ_d, f_i\in\caG_2$. We claim that 
\begin{equation*}
Z_\beta = \prod_{i=1}^N (b_{f_i})^{c_i}.
\end{equation*}
This follows by induction from the case of two faces sharing one edge where equality holds because the shared edge appears with opposite orientations with respect to the two faces, and the fact that $Z_e$ is unitary. By Lemma~\ref{lem:little ab},
\begin{equation}\label{Z Boundary}
Z_{\beta}\upharpoonright{\caS_G} =  \prod_{i=1}^N (b_{f_i})^{c_i}\upharpoonright{\caS_G} = \bbI_{\caS_G}.
\end{equation}
Now, if $[\gamma] = [\gamma']$, then there is a boundary  such that $\gamma' = \gamma + \beta$, and the claim follows from $Z_{\gamma'} = Z_\gamma Z_\beta$ and~(\ref{Z Boundary}). 
\end{proof}
In other words, the operators acting on the ground state space, which are associated with (co)cycles, are completely classified by their (co)homology class. As the proof above indicates, the class of the neutral element, namely $1$-boundaries, corresponds to the identity. The generators of the first homology group shall be described as follows: let $\lambda_i,\tau_j\in \caC_1(G;\bbZ_d)$ be elementary cycles that wind once around the $i$th hole of the genus~$g$ surface in the two possible directions. We shall write $[\lambda_i] = v_i\oplus 0$ and $[\tau_j] = 0\oplus v_j$, where $\{v_i\}_{i=1}^{g}$ denotes the canonical basis of $\bbR^g$. 

These observations naturally lead to the following description of the ground state space of Kitaev's abelian models.
\begin{thm}\label{thm:SBraid}
(i) The ground state space $\caS_G$ carries the following unitary representations: \\
\begin{minipage}{0.45\textwidth}
\begin{align*}
\pi:H_1(G;\bbZ_d)&\longrightarrow\caU(\caS_G) \\
[\gamma]&\longmapsto \pi([\gamma]) = Z_\gamma\upharpoonright\caS_G
\end{align*}
\end{minipage}
, resp. 
\begin{minipage}{0.45\textwidth}
\begin{align*}
\pi\str:H^1(G;\bbZ_d)&\longrightarrow\caU(\caS_G) \\
[\gamma\str]&\longmapsto \pi\str([\gamma\str]) = X_{\gamma\str}\upharpoonright\caS_G
\end{align*}
\end{minipage}.
\\
(ii) The algebra $\caA(\caS_G)$ is generated by
\begin{equation*}
\left\{X_{v_i,0}, X_{0,v_i}, Z_{v_i,0}, Z_{0,v_i}: 1\leq i\leq g\right\}
\end{equation*}
satisfying the relations
\begin{equation*}
(X_{\alpha,\beta})^d = \mathbb{I} = (Z_{\alpha,\beta})^d
\end{equation*}
and
\begin{equation*}
[[Z_{v_i,0},X_{0,v_i}]] = \omega, \qquad [[Z_{0,v_i},X_{v_i,0}]] = \omega,
\end{equation*}
where $[[A,B]] := ABA^{-1}B^{-1}$, and all other commutators vanish. \\
(iii) $\mathrm{dim}(\caS_G) = d^{b_1(G;\bbZ_d)} = d^{2g}$. \\
(iv) Any $\Psi\in\caS_G$ is cyclic in $\caS_G$ for $\left\{X_{\tau\str}Z_{\tau}: [\tau\str]\in H^1(G;\bbZ_d),[\tau]\in H_1(G;\bbZ_d)\right\}$, namely
\begin{equation*}
\caS_G= \mathrm{span}\left\{X_{\tau\str}Z_{\tau}\Psi: [\tau\str]\in H^1(G;\bbZ_d),[\tau]\in H_1(G;\bbZ_d)\right\}.
\end{equation*}
\end{thm}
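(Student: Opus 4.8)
The plan is to prove the statements in the order (iii)$+$nonemptiness, (i), (ii), (iv), since the dimension count both shows $\caS_G\neq\{0\}$ and underlies the structural claims. Because the $A_v,B_f$ are mutually commuting orthogonal projections (Lemma~\ref{lem:AB Rep}), the orthogonal projection onto their common eigenspace is $P:=\prod_{v\in\caG_0}A_v\prod_{f\in\caG_2}B_f$ and $\dim\caS_G=\Tr P$. I would expand each factor via~\eqref{vertex and face operators}, writing $\Tr P$ as $d^{-(\vert\caG_0\vert+\vert\caG_2\vert)}$ times a sum of traces $\Tr\big(\prod_v a_v^{j_v}\prod_f b_f^{k_f}\big)$; note the $a_v,b_f$ all commute, as in the proof of Lemma~\ref{lem:AB Rep}. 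Since the trace factorises over edges and $\tr(X_e^{a}Z_e^{b})=d\,\delta_{a,0}\delta_{b,0}$, a term survives only when the total $X$-exponent and the total $Z$-exponent vanish on every edge. The $X$-part is trivial exactly when $(j_v)_v$ is constant, which by connectedness of $T_g$ leaves $d$ choices, and the $Z$-part exactly when $(k_f)_f\in\ker\partial_2=H_2(T_g;\bbZ_d)\cong\bbZ_d$, again $d$ choices; each surviving term contributes $d^{\vert\caG_1\vert}$. Hence $\Tr P=d^{\vert\caG_1\vert-\vert\caG_0\vert-\vert\caG_2\vert+2}=d^{2-\chi}=d^{2g}$, proving (iii) and $\caS_G\neq\{0\}$.

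For (i), well-definedness of $\pi$ and $\pi\str$ is exactly Lemma~\ref{lem: GS Classes}. For a cycle $\gamma$ the operator $Z_\gamma$ preserves $\caS_G$ by Theorem~\ref{thm:GS Alg}, is unitary on $\caH_G$, and has inverse $Z_{-\gamma}$ which also preserves $\caS_G$; thus $Z_\gamma\upharpoonright\caS_G\in\caU(\caS_G)$, and likewise $X_{\gamma\str}\upharpoonright\caS_G\in\caU(\caS_G)$ for a cocycle $\gamma\str$. The homomorphism property follows from $Z_{\gamma_1+\gamma_2}=Z_{\gamma_1}Z_{\gamma_2}$ and the analogous identity for $X$, so $\pi$ and $\pi\str$ are unitary representations.

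For (ii) I would first note that the four families of generators generate $\caA(\caS_G)$: Theorem~\ref{thm:GS Alg} gives that $\{X_{\gamma\str}Z_\gamma\}$ over cocycles and cycles spans $\caA(\caS_G)$, and restricting to $\caS_G$ and invoking Lemma~\ref{lem: GS Classes}, the span is generated by the $X_{[\gamma\str]}Z_{[\gamma]}$, each of which is a product of the $X_{v_i,0},X_{0,v_i},Z_{v_i,0},Z_{0,v_i}$. Moreover $\caA(\caS_G)\upharpoonright\caS_G$ is the full operator algebra on $\caS_G$, since for any linear $M$ on $\caS_G$ the operator $P_{\caS_G}MP_{\caS_G}$ lies in $\caA(\caS_G)$ and restricts to $M$; with (iii) this gives $\caA(\caS_G)\upharpoonright\caS_G\cong\caM_{d^{2g}}(\bbC)$. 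The relations $(X_{\alpha,\beta})^d=\bbI=(Z_{\alpha,\beta})^d$ are immediate from $X_e^d=Z_e^d=\Id$. The crux---and the step I expect to be the main obstacle---is the commutation relations. One first derives from~\eqref{XZ}, edge by edge, that $Z_\gamma X_{\gamma\str}=\omega^{\langle\gamma,\gamma\str\rangle}X_{\gamma\str}Z_\gamma$, where $\langle\gamma,\gamma\str\rangle$ is the signed number of edges common to $\gamma$ and the dual support of $\gamma\str$, i.e.\ the intersection pairing between the cycle $\gamma$ and the cocycle $\gamma\str$. One must then show, carefully tracking the orientation convention that $(e,e\str,n)$ is right-handed together with the fact that $\lambda_i,\tau_i$ realise the canonical symplectic basis of $H_1(T_g;\bbZ_d)$, that this pairing is the standard symplectic form; this produces $[[Z_{v_i,0},X_{0,v_i}]]=[[Z_{0,v_i},X_{v_i,0}]]=\omega$ and the vanishing of all other commutators. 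Getting the signs and self-pairings right is the delicate part.

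To finish (ii), observe that the presented relations make $(Z_{v_i,0},X_{0,v_i})$ and $(Z_{0,v_i},X_{v_i,0})$ into $2g$ mutually commuting Weyl pairs, each generating a copy of $\caM_d(\bbC)$ by Lemma~\ref{XZ_Basis}; hence the universal $*$-algebra on these generators and relations is $\caM_d(\bbC)^{\otimes 2g}\cong\caM_{d^{2g}}(\bbC)$, of dimension $d^{4g}$. It surjects onto $\caA(\caS_G)\upharpoonright\caS_G\cong\caM_{d^{2g}}(\bbC)$, which has the same dimension $d^{4g}$, so the surjection is an isomorphism and the listed relations are complete. Finally (iv) is immediate: $\caA(\caS_G)\upharpoonright\caS_G$ being the full matrix algebra acts irreducibly on $\caS_G$, so every nonzero $\Psi\in\caS_G$ is cyclic, and since $\{X_{\tau\str}Z_\tau\upharpoonright\caS_G\}$ over classes already spans this algebra, the span of $\{X_{\tau\str}Z_\tau\Psi\}$ is all of $\caS_G$.
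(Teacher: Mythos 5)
Your proposal is correct, and for part (iii) it takes a genuinely different route from the paper. The paper proves (ii) first and then deduces (iii) representation-theoretically: it shows each pair $(\tilde Z,\tilde X)$ with $[[\tilde Z,\tilde X]]=\omega$ generates a copy of $\caM_d(\bbC)$ on $\caS_G$, so that $\caA(\caS_G)\upharpoonright\caS_G\cong\otimes_{i=1}^{2g}\caM_d(\bbC)$, whence the dimension; (iv) then follows from cyclicity of vectors for full matrix algebras, exactly as you argue. You instead compute $\dim\caS_G=\Tr\prod_vA_v\prod_fB_f$ directly, reducing it to the count of configurations $(j_v),(k_f)$ with trivial total exponent on every edge and landing on $d^{2-\chi}=d^{2g}$. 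This trace computation is self-contained, establishes $\caS_G\neq\{0\}$ at the same time (the paper's spectral argument for (iii) tacitly presupposes nonemptiness, which it only makes explicit via the later construction in Proposition~\ref{ExplicitGS}), and exhibits the degeneracy as an Euler-characteristic identity; the price is that it does not by itself give the tensor-product structure of $\caA(\caS_G)\upharpoonright\caS_G$, which you then recover by your dimension-count/surjection argument and which the paper gets for free. For the commutation relations in (ii) your intersection-pairing formulation is equivalent to the paper's device of choosing representatives $\lambda_i,\tau_i\str$ sharing exactly one spin and none with the other generators; you correctly flag the sign/orientation bookkeeping as the delicate step but, like the paper, do not fully carry it out. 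Your explicit observation that $P_{\caS_G}MP_{\caS_G}\in\caA(\caS_G)$ restricts to any given $M$, so that the compression is the full matrix algebra, is a useful point the paper leaves implicit. Parts (i) and (iv) coincide with the paper's argument.
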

\begin{proof}
(i) Immediate consequence of Lemma~\ref{lem: GS Classes}.

(ii) Theorem~\ref{thm:GS Alg} and Lemma~\ref{lem: GS Classes} imply that $\caA(\caS_G)$ is generated by $\left\{X_{\alpha\str,\beta\str} Z_{\alpha,\beta}: \alpha,\alpha\str,\beta,\beta\str\in(\bbZ_d)^g\right\}$. Furthermore,
\begin{equation}\label{Zab}
Z_{\alpha,\beta} = \prod_{i,j=1}^{g}Z_{v_i,0}^{\alpha_i} Z_{0,v_j}^{\beta_j},
\end{equation}
which proves that $\{X_{v_i,0}, X_{0,v_i}, Z_{v_i,0}, Z_{0,v_i}: 1\leq i\leq g\}$ is indeed a generating set. It remains to prove the commutation relations.  They follow from the fact that $\lambda_i$ and $\tau_i\str$ can be chosen to share only one spin, and no spin at all with other generators. Hence, the commutation relation of $Z_{v_i,0}$ and $X_{0,v_i}$ reduce to that of~(\ref{XZ}).

(iii) Let $\tilde Z$ be any of $Z_{v_j,0}$ or $Z_{0,v_j}$, and $\tilde X$ be such that $[[\tilde Z,\tilde X]] = \omega$. Since $\tilde Z$ is unitary and $\tilde Z^d = \mathbb I$, its spectrum is a subgroup of the $d$th roots of unity. Let $\psi\in\caS_G$ be such that $\tilde Z\psi = \mu\psi$, with $\mu^d = 1$. Then, $\langle\psi,  \tilde X \psi\rangle = \langle \tilde Z \psi,  \tilde X \tilde Z \psi\rangle = \omega^{-1}\langle\psi,  \tilde X \psi\rangle$ so that $\tilde X\psi\perp\psi$. In fact $\tilde X^j\psi$ is an eigenvector of $\tilde Z$ for the eigenvalue $\omega^j \mu$. Hence, $\mathrm{Sp}(\tilde Z) = \{\omega^j:0\leq j\leq d-1\}$. The algebra generated by $\tilde X,\tilde Z$ is isomorphic to $\caM_d(\bbC)$ generated by the original $X,Z$~(\ref{XZ}). Since it commutes with all other elements of $\caA(\caS_G)$,
\begin{equation*}
\caA(\caS_G)\cong \otimes_{i=1}^{b_1(G;\bbZ_d)}\caM_d(\bbC),
\end{equation*}
which yields the dimension of the ground state space.

(iv) follows from the isomorphism above and the fact that any vector is cyclic in $\bbC^d$ for $\caM_d(\bbC)$.
\end{proof}
We note that the two representations can also be viewed as a single representation of the homology group of $G\times G\str$. Indeed, K\"unneth's theorem states in the present case that
\begin{align*}
H_1(G\times G\str;\bbZ_d) &\cong \left(H_1(G;\bbZ_d)\otimes H_0(G\str;\bbZ_d)\right) \oplus\left( H_1(G\str;\bbZ_d)\otimes H_0(G;\bbZ_d)\right) \\
&\cong H_1(G;\bbZ_d) \oplus H^1(G;\bbZ_d)
\end{align*}
since $H_0(G;\bbZ_d) \cong \bbZ_d \cong H_0(G\str;\bbZ_d)$ because $G,G\str$ are connected.

\section{The entanglement entropy}\label{GSEE}

With the explicit description of Theorem~\ref{thm:SBraid} and its proof, it is immediate to provide a basis of the ground state space, which in turn allows for a simple expression for the entanglement entropy of any finite subset $\Lambda\subset\caG_1$ of edges. The loop representation developed here, which is by itself not new, provides a simple insight on the origin of both the area law and the topological term. The same conclusion could also be obtained from the `Projected Entangled Pair States' representation~\cite{Verstraete:2006jk} of the ground state, see~\cite{Schuch:2010jp}.

Let $\Omega = \otimes_{e\in\caG_1} l_0\in\caH_G$. Let $K$ be the group generated by $\{a_v:v\in\caG_0\}$ under multiplication, which is finite since $(a_v)^d = \bbI$ and the number of vertices is finite. We note that there is a one-to-one correspondence between $K$ and the $1$-coboundaries $\caB^1(G;\bbZ_d)$.
\begin{prop}\label{ExplicitGS}
Let
\begin{equation*}
\Psi_0 := \vert K \vert^{-1/2}\sum_{h\in K} h\Omega.
\end{equation*}
Then 
\begin{equation}\label{GS Basis}
\left\{\Psi_{\alpha,\beta}:=\prod_{i=1}^g X_{\alpha,\beta}\Psi_0: 0\leq \alpha_i,\beta_i\leq d-1\right\}
\end{equation}
is an orthonormal basis of $\caS_G$.
\end{prop}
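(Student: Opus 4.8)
The plan is to verify three things in turn: that $\Psi_0$ is itself a ground state, that the $d^{2g}$ vectors $\Psi_{\alpha,\beta}$ are orthonormal, and that this number matches $\dim(\caS_G)$, so that orthonormality alone forces them to be a basis.

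First I would check $\Psi_0\in\caS_G$ using the characterisation of Lemma~\ref{lem:little ab}. The reference state $\Omega=\otimes_e l_0$ satisfies $b_f\Omega=\Omega$ for every $f$, since $Z_e l_0=l_0$ and hence $b_f=Z_{\partial f}$ fixes $\Omega$. For the vertex constraint, note that $a_v\in K$ by definition, so left multiplication $h\mapsto a_v h$ is a bijection of $K$; reindexing the sum gives $a_v\Psi_0=\Psi_0$. For the face constraint, recall from the proof of Lemma~\ref{lem:AB Rep} that $[b_f,a_v]=0$ for all $v,f$, so $b_f$ commutes with every $h\in K$ and therefore $b_f\Psi_0=\abs{K}^{-1/2}\sum_h h\,b_f\Omega=\Psi_0$. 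Thus $\Psi_0\in\caS_G$. Since each $X_{\alpha,\beta}$ is a product of cocycle operators $X_{v_i,0},X_{0,v_i}$, Theorem~\ref{thm:GS Alg} gives $\Psi_{\alpha,\beta}=X_{\alpha,\beta}\Psi_0\in\caS_G$ as well.

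The heart of the argument is orthonormality, which I would reduce to the elementary overlap identity $\braket{\Omega}{X_{\gamma\str}\Omega}=\delta_{\gamma\str,0}$ for a cochain $\gamma\str\in\caC^1(G;\bbZ_d)$: indeed $X_{\gamma\str}\Omega=\otimes_e l_{c_{e\str}}$, whose inner product with $\Omega$ is $\prod_e\delta_{0,c_{e\str}}$, nonzero exactly when $\gamma\str=0$. Using that all $X$-operators commute (Theorem~\ref{thm:SBraid}(ii)) and that $X_{\alpha,\beta}^{-1}X_{\alpha',\beta'}=X_{\tau\str}$ with $[\tau\str]=(\alpha'-\alpha,\beta'-\beta)$, one is left to compute $\braket{\Psi_0}{X_{\tau\str}\Psi_0}$. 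Commuting $X_{\tau\str}$ past the (unitary) elements of $K$ and reindexing $g=h^{-1}h'$, which ranges over all of $K$ as $h'$ does, collapses the double sum to $\braket{\Psi_0}{X_{\tau\str}\Psi_0}=\sum_{\gamma\str\in\caB^1(G;\bbZ_d)}\braket{\Omega}{X_{\tau\str+\gamma\str}\Omega}$, where I have used the correspondence $K\cong\caB^1(G;\bbZ_d)$ recorded before the statement. By the overlap identity this counts the coboundaries $\gamma\str$ with $\gamma\str=-\tau\str$, which is $1$ if $\tau\str$ is a coboundary and $0$ otherwise. Hence $\braket{\Psi_{\alpha,\beta}}{\Psi_{\alpha',\beta'}}$ vanishes unless $[\tau\str]=0$, i.e.\ $(\alpha,\beta)=(\alpha',\beta')$, and equals $1$ on the diagonal. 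The main obstacle is the bookkeeping here: keeping the distinction between cochains and their cohomology classes straight, and checking that the group reindexing is legitimate — that $K$ is a genuine operator group in bijection with $\caB^1$, so that $h^{-1}h'$ again corresponds to a coboundary.

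Finally, having produced $d^{2g}$ orthonormal — hence linearly independent — vectors inside $\caS_G$, I would invoke $\dim(\caS_G)=d^{2g}$ from Theorem~\ref{thm:SBraid}(iii) to conclude that $\{\Psi_{\alpha,\beta}\}$ is an orthonormal basis. Alternatively, spanning follows directly from the cyclicity in Theorem~\ref{thm:SBraid}(iv) once one observes that $Z_\tau\Psi_0=\Psi_0$ for every cycle $\tau$, since $Z_\tau$ fixes $\Omega$ and commutes with $K$; then the cyclic generating set $\{X_{\tau\str}Z_\tau\Psi_0\}$ reduces to $\{X_{\alpha,\beta}\Psi_0\}$. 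Either route closes the proof.
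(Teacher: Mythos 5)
Your proof is correct, and its first half --- establishing $\Psi_0\in\caS_G$ via $b_f\Omega=\Omega$, the commutation $[a_v,b_f]=0$, and the reindexing $h\mapsto a_v h$ of the sum over $K$ --- is exactly the paper's argument. Where you genuinely diverge is the orthonormality step. The paper disposes of it in one sentence by appealing to the proof of Theorem~\ref{thm:SBraid}: since $Z_{\alpha,\beta}\Psi_0=\Psi_0$ (a fact recorded immediately after the Proposition), the vectors $X_{\alpha,\beta}\Psi_0$ are joint eigenvectors of the commuting unitaries $Z_{v_i,0},Z_{0,v_i}$ with pairwise distinct joint eigenvalues, hence orthogonal, and the count $d^{2g}=\dim\caS_G$ from Theorem~\ref{thm:SBraid}(iii) finishes the job. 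You instead compute the Gram matrix directly, reducing $\langle\Psi_{\alpha,\beta},\Psi_{\alpha',\beta'}\rangle$ to a count of coboundaries equal to $-\tau\str$ via the overlap identity $\langle\Omega,X_{\gamma\str}\Omega\rangle=\delta_{\gamma\str,0}$, the reindexing $g=h^{-1}h'$, and the bijection $K\cong\caB^1(G;\bbZ_d)$. Both routes are sound; the paper's is shorter because it recycles the representation-theoretic ladder argument already built for the dimension count, while yours is more self-contained, makes the interference mechanism (cancellation unless the cocycle is a coboundary) explicit, and is essentially the same group-averaging computation that later drives the entanglement entropy in Theorem~\ref{thm:EE}. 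Your closing alternative for spanning, via cyclicity and $Z_\tau\Psi_0=\Psi_0$, also matches what the paper leaves implicit.
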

\begin{proof}
As in the proof of Lemma~\ref{lem:AB Rep}, we have $[a_v,b_f] = 0$ for all vertices and faces. It follows that $ b_f\Psi_0 = \vert K \vert^{-1/2} \sum_{h\in K} h b_f \Omega = \Psi_0$ since $b_f\Omega = \Omega$ for all $f\in\caG_2$, and therefore $B_f\Psi_0 = \Psi_0$ by~(\ref{vertex and face operators}). Furthermore, $a_v\Psi_0 = \vert K \vert^{-1/2}\sum_{h\in K} a_vh \Omega = \vert K \vert^{-1/2}\sum_{g\in K} g \Omega = \Psi_0$ since $a_v$ is invertible, $a_v^{-1} = (a_v)^{d-1}$. Hence $A_v\Psi_0 = \Psi_0$, so that $\Psi_0\in\caS_G$. The remaining statements of the Proposition follow from the proof of Theorem~\ref{thm:SBraid}.
\end{proof}

Since $Z_{\alpha,\beta}$ is a cycle, then $[Z_{\alpha,\beta}, a_v]=0$ again by~(\ref{ZaCommutation}), so that $[Z_{\alpha,\beta}, h]=0$ for all $h\in K$. Hence $Z_{\alpha,\beta}\Psi_0 = \Psi_0$ as, moreover, $Z_{\alpha,\beta}\Omega = \Omega$. It follows from this and the argument in the proof of Theorem~\ref{thm:SBraid} that the basis~(\ref{GS Basis}) is a joint eigenbasis of all $Z_{\alpha,\beta}$ operators, and that the $X_{\alpha,\beta}$ operators are cyclic shifts within the basis.

The concrete representation of the ground states given in Proposition~\ref{ExplicitGS} allows for a simple computation of the entanglement entropy. Key here is the fact that $\Psi_0$ is an equal weight superposition of the action of all group elements on a product state, an observation originally made in~\cite{Hamma:2005je}.
\begin{thm}\label{thm:EE}
Let $\Lambda\subset \caG_1$. Let $\rho_{\alpha,\beta}(\Lambda)$ be the reduced density matrix of the state $\Psi_{\alpha,\beta}$, and let
\begin{equation*}
S_{\alpha,\beta}(\Lambda):= -\Tr\left(\rho_{\alpha,\beta}(\Lambda) \log \rho_{\alpha,\beta}(\Lambda)\right).
\end{equation*}
Then $S_{\alpha,\beta}(\Lambda) = S(\Lambda)$ for all $\alpha,\beta\in(\bbZ_d)^g$, with
\begin{equation}\label{Entropy}
S(\Lambda) = -\log \frac{\vert K_\Lambda \vert \vert K_{\Lambda^c}\vert}{\vert K \vert},
\end{equation}
where $\Lambda^c = \caG_1\setminus\Lambda$, and for any $Y\subset \caG_1$, $K_Y$ denotes the group generated by $\{a_v:e\ni v\Rightarrow e \in Y\}$.
\end{thm}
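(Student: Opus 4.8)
The plan is to reduce the computation to the single reference state $\Psi_0$ and then exploit its structure as a uniform superposition over the finite abelian group $K$, so that the reduced density matrix becomes a block of constant entries whose spectrum is flat.

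\emph{Step 1: reduction to $\Psi_0$.} First I would show that $S_{\alpha,\beta}(\Lambda)$ does not depend on $\alpha,\beta$, so that it suffices to compute $S_0(\Lambda)$, the entropy of $\Psi_0$. Since $\Psi_{\alpha,\beta}=X_{\alpha,\beta}\Psi_0$ and $X_{\alpha,\beta}$ is a product of single-edge operators $X_e^{\,\cdot}$, it factorises across the bipartition $\caG_1=\Lambda\cup\Lambda^c$ as $X_{\alpha,\beta}=U_\Lambda\otimes U_{\Lambda^c}$ with $U_\Lambda=\prod_{e\in\Lambda}X_e^{\,\cdot}$ and $U_{\Lambda^c}$ unitary. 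Hence $\rho_{\alpha,\beta}(\Lambda)=U_\Lambda\,\rho_0(\Lambda)\,U_\Lambda^*$, and since the von Neumann entropy is invariant under unitary conjugation, $S_{\alpha,\beta}(\Lambda)=S_0(\Lambda)$.

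\emph{Step 2: the reduced density matrix of the group state.} Because each $h\in K$ is a product of $X$-operators and $\Omega$ is a computational-basis product state, each $h\Omega=\otimes_e l_{c_e(h)}$ is again such a product state, and the correspondence $K\cong\caB^1(G;\bbZ_d)$ guarantees that the states $\{h\Omega:h\in K\}$ are pairwise orthogonal. Each $h$ factorises as $(h\upharpoonright\Lambda)\otimes(h\upharpoonright\Lambda^c)$, so tracing out $\Lambda^c$ in $\rho_0=|K|^{-1}\sum_{h,h'}(h\Omega)(h'\Omega)^*$ and using that the overlap of the $\Lambda^c$-factors $\langle h'\Omega\,\vert\,h\Omega\rangle_{\Lambda^c}\in\{0,1\}$ equals $1$ precisely when $hh'^{-1}$ acts trivially on $\Lambda^c$ — that is, when $hh'^{-1}\in K_\Lambda$, the subgroup of elements of $K$ supported in $\Lambda$ — yields
\[
\rho_0(\Lambda)=\frac{1}{|K|}\sum_{h\in K}\sum_{k\in K_\Lambda}(k\upharpoonright\Lambda)\,\lvert h\Omega\rangle_\Lambda\langle h\Omega\rvert_\Lambda .
\]

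\emph{Step 3: the spectrum.} I would then analyse the homomorphism $r_\Lambda\colon K\to(\bbZ_d)^\Lambda$ sending $h$ to its configuration on $\Lambda$, via two counting facts: (a) its fibres are the cosets of $K_{\Lambda^c}$, each of cardinality $|K_{\Lambda^c}|$, so the distinct $\Lambda$-configurations number $|K|/|K_{\Lambda^c}|$ and index an orthonormal family; and (b) $r_\Lambda$ restricted to $K_\Lambda$ is injective, since its kernel consists of elements supported in both $\Lambda$ and $\Lambda^c$, hence trivial. Rewriting $\rho_0(\Lambda)$ in this orthonormal basis, the $(a',a)$ entry equals $|K_{\Lambda^c}|/|K|$ when $a'-a\in r_\Lambda(K_\Lambda)$ and vanishes otherwise. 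Thus $\rho_0(\Lambda)$ is block-diagonal along the cosets of $r_\Lambda(K_\Lambda)$, each block being the constant $|K_\Lambda|\times|K_\Lambda|$ matrix with entry $|K_{\Lambda^c}|/|K|$. Each block contributes a single nonzero eigenvalue $|K_\Lambda|\,|K_{\Lambda^c}|/|K|$, and there are $N:=|K|/(|K_\Lambda|\,|K_{\Lambda^c}|)$ such blocks; the spectrum is therefore flat with $N$ eigenvalues equal to $1/N$ (one checks the trace is $1$), giving $S_0(\Lambda)=\log N=-\log\big(|K_\Lambda|\,|K_{\Lambda^c}|/|K|\big)$, which is~(\ref{Entropy}).

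\emph{Main obstacle.} The local-unitary reduction and the linear algebra of the constant blocks are routine; the delicate point — and the origin of both the area law and the topological term — is the precise identification of the two subgroups $K_\Lambda,K_{\Lambda^c}$ of elements of $K$ supported in $\Lambda$, $\Lambda^c$, together with the injectivity of $r_\Lambda$ on $K_\Lambda$. One must keep track of the \emph{supports} of products of vertex operators rather than of the generating $a_v$ individually: although $K_\Lambda\times K_{\Lambda^c}$ injects into $K$ (the two subgroups commute and meet trivially), it generally does not exhaust $K$, and the precise deficiency $N=|K|/(|K_\Lambda|\,|K_{\Lambda^c}|)>1$ is exactly what the formula measures. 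This bookkeeping of supports is where the argument has to be carried out with care.
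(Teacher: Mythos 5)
Your proposal follows the same route as the paper's proof: reduce to $\Psi_0$ by observing that $X_{\alpha,\beta}$ factorises into single-edge unitaries, exploit the fact that $\Psi_0$ is a uniform superposition of the orthonormal product states $\{h\Omega\}_{h\in K}$, and conclude that the spectrum of $\rho_0(\Lambda)$ is flat. The only real difference is presentational: the paper deduces flatness from the operator identity $\rho_{00}(\Lambda)\bigl(\rho_{00}(\Lambda)-\tfrac{|K_\Lambda||K_{\Lambda^c}|}{|K|}\,\bbI\bigr)=0$ together with $\Tr\rho_{00}(\Lambda)=1$, whereas you exhibit the explicit block structure over the cosets of $r_\Lambda(K_\Lambda)$; the two arguments are equivalent and yours is, if anything, more transparent.

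The one substantive step you do not carry out is precisely the one you flag. In Steps~2--3 you use $K_\Lambda$ to mean the subgroup $N_\Lambda:=\{h\in K:\ h \text{ acts trivially on } \Lambda^c\}$ of elements of $K$ \emph{supported} in $\Lambda$, whereas the theorem defines $K_Y$ as the group \emph{generated} by the star operators $a_v$ of interior vertices. Your computation is correct verbatim with $N_\Lambda,N_{\Lambda^c}$ in place of $K_\Lambda,K_{\Lambda^c}$ and yields $S(\Lambda)=-\log\bigl(|N_\Lambda|\,|N_{\Lambda^c}|/|K|\bigr)$; the identification $N_Y=K_Y$ is exactly the content of the paper's ``first claim'', which it settles by arguing that an element of $K$ acting trivially on $\Lambda$ must be generated by star operators supported in $\Lambda^c$. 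You are right that this is where the care is needed: the inclusion $K_Y\subset N_Y$ is obvious, but the reverse inclusion is a genuine statement about supports of coboundaries and is not automatic for arbitrary $\Lambda$. For instance, if $\Lambda$ is the set of all horizontal edges of a square-lattice torus, then $K_\Lambda$ is trivial in the theorem's sense (no vertex is interior to $\Lambda$), yet the product of the $a_v$ over a single column of vertices winding around the torus cancels on every vertical edge and hence is a nontrivial element of $N_\Lambda$. So to complete the argument you must either prove $N_Y=K_Y$ for the regions $\Lambda$ under consideration (which is what the paper's cocycle-support argument is meant to do, and which does hold for the simple regions relevant to the area-law corollary), or state the result with the intrinsic subgroups $N_\Lambda,N_{\Lambda^c}$ --- which is what your computation actually establishes.
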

\begin{proof}
The fact that all entropies are equal follows from Proposition~\ref{ExplicitGS} and the observation that the $X_{\alpha,\beta}$ operators are products of one-site unitaries, so that all $\rho_{\alpha,\beta}({\Lambda})$ are unitarily equivalent. We therefore compute $S_{0,0}({\Lambda})$. Since any $a_v$ is a tensor product, any group element $h\in K$ can be decomposed into $h = t_{\Lambda}^h\otimes t_{\Lambda^c}^h$, where $t_{\Lambda}^h\in\caA_{\Lambda}$, but not necessarily in $K_{\Lambda}$. We first claim that
\begin{equation*}
t_{\Lambda}^h = t_{\Lambda}^{\tilde h} \Longleftrightarrow \tilde h = s h,\quad s\in K_{\Lambda^c}.
\end{equation*}
The implication $\Leftarrow$ being immediate, we consider $h,\tilde h$ such that $t_{\Lambda}^h = t_{\Lambda}^{\tilde h}$. That implies that $\tilde h = (1\otimes s_{\Lambda^c}) h$, namely $(1\otimes s_{\Lambda^c}) = \tilde h h^{-1}\in K$. But an element in $s\in K$ which is a simple tensor product of the form $s = (1\otimes s_{\Lambda^c})$ belongs to $K_{\Lambda^c}$. Indeed, $s$ can be represented as a polynomial in $\{X_{\gamma\str}:\caC^1(G;\bbZ_d)\}$ and $s\notin K_{\Lambda^c}$ implies that there is a cocycle having spins in both $\Lambda$ and $\Lambda^c$ so that $s$ cannot act trivially on $\Lambda$.

Let now $P_\Omega$ be the orthogonal projection onto $\Omega = \Omega_{\Lambda}\otimes\Omega_{\Lambda^c}$. We have
\begin{equation*}
\rho_{00}({\Lambda}) 
=\vert K\vert^{-1} \sum_{h,\tilde h\in K} \Tr_{\caH_{\Lambda^c}}\left(h P_\Omega \tilde h\str\right)
= \vert K\vert^{-1} \sum_{h,\tilde h\in K} t_{\Lambda}^h P_{\Omega_{\Lambda}} (t_{{\Lambda}}^{\tilde h})\str
\left\langle \Omega_{\Lambda^c}, (t_{\Lambda^c}^{\tilde h})\str t_{\Lambda^c}^h\Omega_{\Lambda^c}\right\rangle,
\end{equation*}
and the scalar product is non-zero if and only if $t_{\Lambda^c}^{\tilde h} = t_{\Lambda^c}^h$. By the initial claim,
\begin{equation*}
\rho_{00}({\Lambda}) = \vert K\vert^{-1} \sum_{h\in K, s\in K_{\Lambda}} t_{\Lambda}^h P_{\Omega_{\Lambda}} (t_{{\Lambda}}^{h})\str  s\str
\end{equation*}
Using the same remark, this can be further simplified as
\begin{equation*}
\rho_{00}({\Lambda}) = \frac{\vert K_{\Lambda^c}\vert }{\vert K\vert} \sum_{\substack{h\in K/K_{\Lambda^c} \\  s\in K_{\Lambda}}} t_{\Lambda}^h P_{\Omega_{\Lambda}} (t_{{\Lambda}}^{h})\str  s\str.
\end{equation*}
The theorem will follow from the identity
\begin{equation*}
\rho_{00}({\Lambda})\left(\rho_{00}({\Lambda}) - \frac{\vert K_{\Lambda} \vert \vert K_{\Lambda^c}\vert}{\vert K \vert} \bbI \right) = 0,
\end{equation*}
which proves that all non-zero eigenvalues of $\rho_{00}({\Lambda})$ are equal to $\frac{\vert K_{\Lambda} \vert \vert K_{\Lambda^c}\vert}{\vert K \vert}$, and the fact that $\Tr \rho_{00}({\Lambda}) = 1$. We have
\begin{align*}
\rho_{00}({\Lambda})^2 &= \frac{\vert K_{\Lambda^c}\vert^2 }{\vert K\vert^2}\sum_{\substack{h_1,h_2 \in K/K_{\Lambda^c} \\ s_2,s_2 \in K_{\Lambda}} } 
t_{\Lambda}^{h_1} P_{\Omega_{\Lambda}} (t_{{\Lambda}}^{h_2})\str s_2\str
\left\langle \Omega_{\Lambda^c}, (t_{\Lambda^c}^{h_1})\str s_1\str t_{\Lambda^c}^{h_2}\Omega_{\Lambda^c}\right\rangle \\
&=\frac{\vert K_{\Lambda^c}\vert^2 }{\vert K\vert^2}\sum_{\substack{h \in K/K_{\Lambda^c} \\ s_2,s_2 \in K_{\Lambda}}}   t_{\Lambda}^{h} P_{\Omega_{\Lambda}} (t_{{\Lambda}}^{h})\str s_1\str s_2\str.
\end{align*}
By the group property, the double sum over $s_1,s_2$ reduces to $\vert K_{\Lambda}\vert$ times a simple sum over $\{s:s\in K_{\Lambda}\}$, thereby yielding the claim.
\end{proof}
We define the interior $\Lambda_0$ and the boundary $\partial \Lambda$ of any $\Lambda\subset\caG_1$ as the following subsets of vertices:
\begin{align*}
\Lambda_0 &:= \{v\in\caG_0: \text{if } e\in\caG_1\text{ such that } e\ni v\text{, then } e \in \Lambda\}; \\
\partial \Lambda &:=\{v\in\caG_0: v\cap\Lambda\neq\emptyset\text{ and }v\cap\Lambda^c\neq\emptyset\}.
\end{align*}
Furthermore, we shall say that $\Lambda\subset \caG_1$ is \emph{simple} if $K_\Lambda$ is the free abelian group (over $\bbZ_d$) generated by $\{a_v:v\subset \Lambda_0\}$. Typical examples to keep in mind are simply connected subsets that do not `wind around the surface' at all. With this, we can prove the remarkable dependence of the entanglement entropy on the properly defined surface area of the set $\Lambda$, with a topological term $S_{\mathrm{topo}} = -\log d$ which is independent of the genus $g$.
\begin{cor}
Let $\Lambda\subset \caG_1$ be simple. Then,
\begin{equation}\label{EE}
S(\Lambda) = \left(\vert \partial \Lambda\vert - 1\right)\log d.
\end{equation}
\end{cor}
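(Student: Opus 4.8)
The starting point is the formula
\[
S(\Lambda) = -\log\frac{\vert K_\Lambda\vert\,\vert K_{\Lambda^c}\vert}{\vert K\vert}
\]
of Theorem~\ref{thm:EE}, so the entire task reduces to computing the three cardinalities $\vert K\vert$, $\vert K_\Lambda\vert$ and $\vert K_{\Lambda^c}\vert$ in terms of the vertex counts $\vert\Lambda_0\vert$, $\vert(\Lambda^c)_0\vert$ and $\vert\partial\Lambda\vert$. The plan is to determine, once and for all, the order of the group $\langle a_v:v\in S\rangle$ generated by an arbitrary family $S\subseteq\caG_0$, and then to specialise to $S=\caG_0$, $S=\Lambda_0$ and $S=(\Lambda^c)_0$. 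Throughout I assume $\emptyset\neq\Lambda\subsetneq\caG_1$; the degenerate cases are excluded since $\Lambda=\caG_1$ is not simple (there $K_\Lambda=K$ is not free) and $\Lambda=\emptyset$ gives $S=0$ directly.

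The key structural input is an independence lemma for the operators $a_v$. Recall from the discussion preceding Proposition~\ref{prop: all A span} that $a_v = X_{\partial\str f\str_v}$, where $f\str_v\in\caG^2$ is the dual $2$-cell attached to $v$, and that $\gamma\str\mapsto X_{\gamma\str}$ is a \emph{faithful} unitary representation of $\caC^1(G;\bbZ_d)$, being a tensor product of the powers $X_e^{c_e}$, each of which is $\bbI$ only for $c_e=0$. Hence, for $n_v\in\bbZ_d$,
\[
\prod_{v\in S}(a_v)^{n_v} = X_{\partial\str\left(\sum_{v\in S} n_v f\str_v\right)} = \bbI
\quad\Longleftrightarrow\quad
\sum_{v\in S} n_v f\str_v \in \caL_2(G\str;\bbZ_d),
\]
i.e. iff the dual $2$-chain $\sum_{v\in S}n_v f\str_v$ is a $2$-cycle. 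Since $T_g$ is connected, closed and orientable and carries no $3$-cells, the only $2$-cycles of $G\str$ are the $\bbZ_d$-multiples of the fundamental class $\sum_{v\in\caG_0} f\str_v$, which is supported on \emph{every} dual face. I would prove this by the standard adjacency argument: each dual edge is shared by exactly two dual faces with opposite induced orientation, so $\partial\str(\sum n_v f\str_v)=0$ forces $n_v=n_w$ whenever $f\str_v,f\str_w$ share a dual edge, and connectivity of the dual face-adjacency graph then makes all $n_v$ equal. Consequently, if $S\subsetneq\caG_0$ is a proper subset, any relation would come from a fundamental-class multiple supported inside $S$ and must therefore vanish; thus $\{a_v:v\in S\}$ is free over $\bbZ_d$ and $\vert\langle a_v:v\in S\rangle\vert = d^{\vert S\vert}$. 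Taking $S=\caG_0$ instead, the relation lattice is exactly $\bbZ_d\cdot(1,\dots,1)$, of order $d$, whence $\vert K\vert = d^{\vert\caG_0\vert-1}$, which also recovers the correspondence $K\cong\caB^1(G;\bbZ_d)$.

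Specialising the lemma, both $\Lambda_0$ and $(\Lambda^c)_0$ are proper subsets of $\caG_0$ (an edge of $\Lambda$, resp.\ of $\Lambda^c$, excludes its two endpoints from the opposite set), so $\vert K_\Lambda\vert = d^{\vert\Lambda_0\vert}$ — which is precisely the simplicity hypothesis on $\Lambda$ — and $\vert K_{\Lambda^c}\vert = d^{\vert(\Lambda^c)_0\vert}$. It remains to assemble the pieces. By definition the three sets $\Lambda_0$, $(\Lambda^c)_0$ and $\partial\Lambda$ partition $\caG_0$ (each vertex has either all of its edges in $\Lambda$, none in $\Lambda$, or a mix), so $\vert\caG_0\vert = \vert\Lambda_0\vert + \vert(\Lambda^c)_0\vert + \vert\partial\Lambda\vert$. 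Substituting into the entropy formula yields
\[
S(\Lambda) = -\log d^{\,\vert\Lambda_0\vert + \vert(\Lambda^c)_0\vert - \vert\caG_0\vert + 1} = -\log d^{\,1-\vert\partial\Lambda\vert} = \bigl(\vert\partial\Lambda\vert-1\bigr)\log d,
\]
as claimed in~(\ref{EE}).

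The main obstacle is the independence and counting step, namely pinning down the kernel of $v\mapsto a_v$ and showing it is generated by the single fundamental $2$-cycle; once this homological fact is in hand, the partition of $\caG_0$ and the algebraic simplification are pure bookkeeping. The simplicity hypothesis is exactly what lets one read off $\vert K_\Lambda\vert=d^{\vert\Lambda_0\vert}$, while the very same lemma supplies $\vert K_{\Lambda^c}\vert$ (via the proper subset $(\Lambda^c)_0$) and $\vert K\vert$ (via the full $\caG_0$, where the global relation produces the $-1$). I would therefore isolate the independence statement as a short lemma and invoke it three times.
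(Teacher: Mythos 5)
Your proof is correct and follows essentially the same route as the paper: reduce everything to the three cardinalities $\vert K\vert$, $\vert K_\Lambda\vert$, $\vert K_{\Lambda^c}\vert$ as powers of $d$ and use the partition $\caG_0=\Lambda_0\sqcup(\Lambda^c)_0\sqcup\partial\Lambda$. The only difference is that you justify the counting through a clean independence lemma (relations among the $a_v$ correspond to dual $2$-cycles, hence to $\bbZ_d$-multiples of the fundamental class), whereas the paper simply asserts the single constraint $\prod_{v\in\caG_0}a_v=\bbI$ and uses $\vert K_{\Lambda^c}\vert=d^{\vert(\Lambda^c)_0\vert}$ implicitly; your lemma makes both of these points rigorous.
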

\begin{proof}
If $\Lambda$ is simple, then $\vert K_{\Lambda}\vert = d^{\vert {\Lambda_0}\vert}$ because the map
\begin{align*}
F: \quad &K_{\Lambda} \longrightarrow \bbZ_d^{\phantom{d}\Lambda_0} \\
&h= \prod_{v\in {\Lambda_0}} a_v^{j(v)} \longmapsto F(h) = \{F(h)_v = j(v) :v\in {\Lambda_0}\}
\end{align*}
is a bijection. However, since the full graph is embedded in a closed surface, the $\{a_v:v\in\caG_0\}$ are subject to a constraint, namely $\prod_{v\in\caG_0} a_v = \bbI$. Indeed, $\caG_1$ is not simple and in that case the map $F$ presented above is $1$-to-$d$, as $F(h)_v = j(v) + k$ represent the same group element for any $0\leq k\leq d-1$. Therefore, $\vert K\vert = d^{\vert \caG_0\vert-1}$. Noting that $\partial \Lambda = \caG_0 \setminus (\Lambda_0 \cup (\Lambda^c)_0)$ with $\Lambda_0 \cap (\Lambda^c)_0 = \emptyset$, we obtain from Theorem~\ref{thm:EE} that
\begin{equation*}
S(\Lambda) = -(\log d)\left(\vert \Lambda_0\vert + \vert (\Lambda^c)_0\vert - (\vert \caG_0 \vert -1) \right) = \left(\vert \partial \Lambda\vert - 1\right)\log d,
\end{equation*}
which is what we had set to prove.
\end{proof}

Equation~(\ref{EE}) can also be understood in terms of loop configurations. Indeed, $K$, identified with the set of $1$-coboundaries, is the set of all loop configurations with $d-1$ types of loops upon the empty state $\Omega$. A loop of type $j$ is created by $\prod a_v^j$, where the product is over all vertices, namely dual faces, the boundary of which make up the loop.

In this picture, $\frac{\vert K_\Lambda \vert \vert K_{\Lambda^c}\vert}{\vert K \vert}$ is the ratio of the number of loop configurations with no loop crossing the boundary of $\Lambda$ to the total number of loop configurations. An arbitrary loop configuration can be obtained from a `disjoint' one by the application of a number of $a_v$'s  with $v\in\partial\Lambda$. However, not all such operations create a crossing: $\prod_{v\in\partial \Lambda}a_v$ in fact creates two loops, one inside $\Lambda$, the other outside of it, with no crossing at all, see Figure~\ref{Fig:Loops}. On each of the free $\vert\partial \Lambda \vert-1$ boundary vertices $v$, the operator $a_v$ may act between $0$ and $d-1$ times, so that $\vert K \vert =  d^{\vert\partial \Lambda \vert-1} \vert K_\Lambda \vert \vert K_{\Lambda^c}\vert$ yielding~(\ref{EE}) again. This of course is nothing else than a restatement of the constraint used in the proof above since $\prod_{v\in\partial \Lambda}a_v = \prod_{v\in\Lambda_0}a_v^{-1}\prod_{v\in(\Lambda^c)_0}a_v^{-1}$, and because the outside loop is indeed a boundary in the toric topology. 
\begin{figure}
\includegraphics[width=0.6\textwidth]{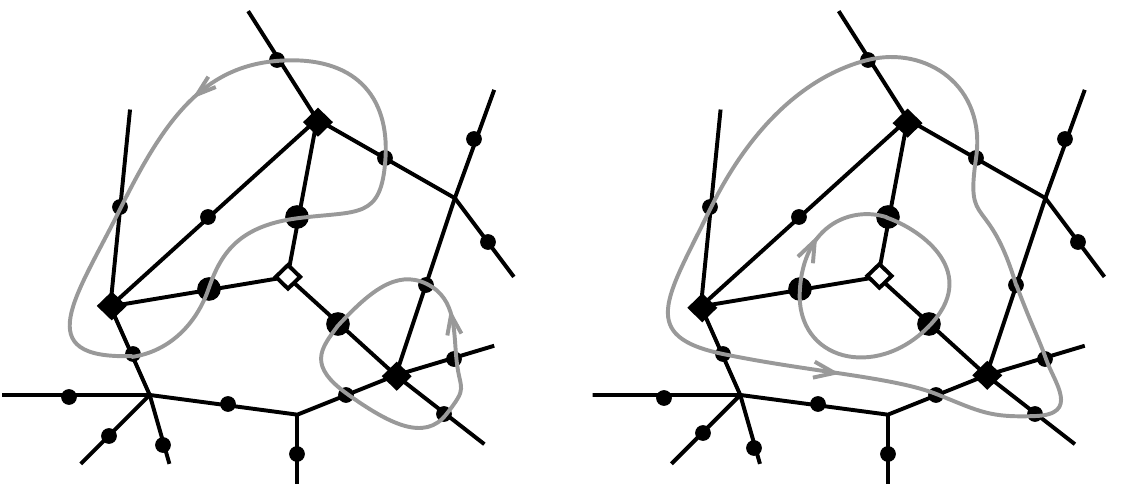}
\caption{In this figure, $\Lambda$ is the set of the three fattened circles, $\partial\Lambda$ the three black diamonds and $\Lambda_0$ the white diamond. The two figures depict the fact that acting on all vertices of $\partial\Lambda$ does create only apparent crossings. On the left is the direct loop interpretation, while the same action on the spins is interpreted on the right with two different loops, one completely inside $\Lambda$ and one completely outside of it.}
\label{Fig:Loops}
\end{figure}

Finally, we note that the topological term agrees, as is well-known, with the Kitaev-Preskill formula~\cite{Kitaev:2006dn}: As we shall see in the next section, there are $d^2$ abelian anyons in the theory, so that the total quantum dimension of the model is $D = \left(\sum_{\rho=1}^{d^2} 1^2\right)^{1/2} = d$. See also~\cite{Fiedler:2016up} for a very recent discussion of the quantum dimension and its relation to the topological entanglement entropy in the setting of algebraic quantum field theory.

\section{Excited states: Abelian anyons}\label{sec:Excited states}

We finally turn to the relation between the topological ground state degeneracy and the anyonic nature of the elementary excitations of the model. They are in this case equivalent characterisations of topological order. For notational simplicity, we restrict our attention to the torus, namely $g=1$. First of all, the excited state spaces are obtained by a unitary mapping from the ground state space $\caS_G$. Secondly, and in fact more importantly, the statistical phase of the distinguishable particles $X^k$ and $Z^l$, obtained by locally braiding one particle around the other, is equal to the only non-trivial combination of the global loop operators in $\caA(\caS_G)$.

As Section~\ref{Sec:GSS} has exposed, the degeneracy of the ground state space arises from the fact that it carries representations of the homology and cohomology groups of the cell decomposition. More generally, the cycle and cocycle groups are symmetry groups of the Hamiltonian.
\begin{prop}
For any $\gamma\in\caL_1(G;\bbZ_d)$, $[Z_\gamma,H_G] = 0$, and for any $\gamma\str\in\caL^1(G;\bbZ_d)$, $[X_{\gamma\str},H_G] = 0$.
\end{prop}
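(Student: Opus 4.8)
The plan is to reduce the statement to commutation with the individual Hamiltonian terms and then invoke the computations already carried out in the proof of Theorem~\ref{thm:GS Alg}. Since $H_G = -\sum_{v\in\caG_0}A_v - \sum_{f\in\caG_2}B_f$, it suffices to show that $[Z_\gamma, A_v] = 0$ and $[Z_\gamma, B_f] = 0$ for every vertex $v$ and every face $f$; the assertion for $X_{\gamma\str}$ will then follow by the dual argument.

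First I would dispose of the trivial half. Both $Z_\gamma$ and $B_f = Z_{\partial f}$ are products of the mutually commuting diagonal operators $Z_e$, so $[Z_\gamma, B_f] = 0$ for an \emph{arbitrary} chain $\gamma$, with no hypothesis required. The content of the statement therefore lies entirely in the commutation of $Z_\gamma$ with the vertex operators $A_v$.

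For this I would use the relation~(\ref{ZaCommutation}), which records that the only possibly non-commuting factors of $Z_\gamma$ and $a_v$ are those supported on the edges incident to $v$, and that moving $Z_\gamma$ past $a_v$ produces the scalar $\omega^{\sum_{e\ni v}\varepsilon(e)c_e}$, where $\gamma = \sum_e c_e e$. The cycle condition $\partial\gamma = 0$ is precisely the statement that $\sum_{e\ni v}\varepsilon(e)c_e = 0$ at every vertex $v$ (recall $\partial_1 e = \tilde v - v$ for $e$ oriented from $v$ to $\tilde v$), so this phase is trivial and $[Z_\gamma, a_v] = 0$. Since $A_v$ is a polynomial in $a_v$, this upgrades to $[Z_\gamma, A_v] = 0$, completing the proof that $[Z_\gamma, H_G] = 0$.

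The statement for $X_{\gamma\str}$ is obtained by the same argument on the dual complex: $X_{\gamma\str}$ commutes with every $A_v$ because both are products of $X_e$ operators, while its commutation with each $B_f$ follows from the cocycle condition $\partial\str\gamma\str = 0$ exactly as above. The only point requiring a moment's care is the bookkeeping of the orientations $\varepsilon(e)$ in~(\ref{ZaCommutation}), but since that identity is already established and the cycle and cocycle conditions are literally the vanishing of the relevant oriented edge sums at each vertex, respectively each face, there is no genuine obstacle: the proposition is an immediate corollary of the machinery built for Theorem~\ref{thm:GS Alg}.
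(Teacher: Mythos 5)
Your proposal is correct and follows essentially the same route as the paper, which simply recalls from the proof of Theorem~\ref{thm:GS Alg} that $[Z_\gamma,A_v]=0=[Z_\gamma,B_f]$ for cycles $\gamma$ (via~(\ref{ZaCommutation}) and $\sum_{e\ni v}\varepsilon(e)c_e=0$) and the dual statement for cocycles. Your only addition is to spell out those already-established computations term by term, which changes nothing of substance.
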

\begin{proof}
As noted in the proof of Theorem~\ref{thm:GS Alg}, $[Z_\gamma,A_v] = 0 = [Z_\gamma, B_f]$ for all $v\in\caG_0,f\in\caG_2$, whenever $\gamma\in \caL_1(G;\bbZ_d)$, and similarly for $X_{\gamma\str}$ for $\gamma\str\in\caL^1(G;\bbZ_d)$.
\end{proof}
\noindent In particular, each eigenspace of the Hamiltonian carries projective representations of both groups. 

A convenient way to describe the elementary particles is using walks. A \emph{walk} is an alternating sequence of vertices and edges starting and ending at a vertex, and it is naturally identified with a $1$-chain: for the walk $(v_0,e_1,v_1,e_2,\ldots e_N,v_N)$, its corresponding chain is $\gamma = \sum_{i=1}^N \varepsilon(e_i) e_i$, where $\varepsilon(e_i) = 1$ is $e_i$ is oriented from $v_i$ to $v_{i+1}$ and $\varepsilon(e_i) = -1$ if the edge has the opposite orientation. With this, $\partial\gamma = e_N - e_0$. Similarly, a \emph{cowalk} is an alternating sequence of faces and dual edges starting and ending at a face, and it is naturally identified with a $1$-cochain.

Let $\Psi\in\caS_G$ be an arbitrary ground state vector, and let
\begin{equation*}
\zeta^k_{\gamma} := Z^k_\gamma\Psi_{},\qquad \xi_{\gamma\str}^k := X_{\gamma\str}^k\Psi_{}
\end{equation*}
where $\gamma$ is a walk from $v$ to $\tilde v$, with $v\neq \tilde v$ and $1\leq k\leq d-1$, and similarly for the cowalk $\gamma\str$. $Z_\gamma^k$ commutes with all projectors in the Hamiltonian but two, namely $A_{v},A_{\tilde v}$. For those, we have
\begin{equation*}
A_v Z_\gamma^k\Psi_{} = Z_\gamma^k \frac{1}{d}\sum_{j=0}^{d-1}\omega^{-\varepsilon(e_1) (j+k)}(a_v)^j \Psi_{} = (\omega^{-\varepsilon(e_1)}Z_\gamma)^k \frac{1}{d}\sum_{j=0}^{d-1}\omega^{-\varepsilon(e_1) j} \Psi_{} = 0,
\end{equation*}
and similarly for the vertex $\tilde v$. Hence, since all $B_f$ terms commute with $Z_\gamma^k$,
\begin{equation*}
H_G\zeta_{\gamma}^k = (E_G + 2)\zeta_{\gamma}^k
\end{equation*}
where $E_G = -\left(\vert \caG_0\vert + \vert \caG_2\vert\right)$ is the ground state energy. We shall say that $\zeta_{\gamma}^k$ is a state having two elementary excitations located at $v$ and $\tilde v$. In this case, $a_v\zeta_{\gamma}^k = \omega^{\varepsilon(e_1) k} \zeta_{\gamma}^k$ while $a_{\tilde v}\zeta_{\gamma}^k = \omega^{-\varepsilon(e_N)  k} \zeta_{\gamma}^k$: the string operator $Z_\gamma^k$ creates a particle-antiparticle pair with `charges' $\pm k$ at its endpoints.

Note that although it may be tempting to describe $\zeta_{\gamma}^k$ as having $k$ identical particles, the fact that the energy is independent of $k$ shows that this is physically not a correct interpretation. There are indeed $2(d-1)$ elementary particles corresponding to the states $\{\zeta_\gamma^k,\xi_{\gamma\str}^j: 1\leq j,k\leq d-1\}$, and they can be fused to yield a total of $d^2$ particle states, including the vacuum, given by $\{X_{\gamma\str}^l Z_\gamma^k\Psi_{}: 0\leq k,l\leq d-1\}$.

By the group property, these particles are \emph{transportable} (in discrete steps, from any edge, resp.~face, to any other edge, resp.~face): If $\gamma_1$ is a walk from $v_0$ to $v_1$ and $\gamma_2$ a walk from $v_1$ to $v_2$, then $\gamma_1 + \gamma_2$ is a walk from $v_0$ to $v_2$, and $Z_{\gamma_2}$ transports the particle at $v_1$ to $v_2$. Once again, similar conclusions hold for $\xi_{\gamma\str}^l$. Therefore, there is a physically meaningful implementation of braiding operations. For simplicity, we consider here the case $k = l = 1$, the generalisation to general $k,l$ is immediate. Let $\gamma,\gamma\str$ be (co)walks from $v$ to $\tilde v$, respectively $f$ to $\tilde f$, with $v\neq \tilde v, f\neq \tilde f$. Then there is a closed walk $\tau$ from $v$ to itself, that is a boundary and such that $\tau\cap\gamma\str$ contains only one edge. The action of $Z_\tau$ on $\zeta_\gamma$ corresponds to transporting the $Z$-particle at $v$ around the loop $\tau$. In the additional presence of an $X$ particle, we therefore call the map
\begin{equation*}
\varphi = X_{\gamma\str}Z_\gamma\Psi \longmapsto B\varphi = Z_\tau X_{\gamma\str} Z_\gamma\Psi
\end{equation*}
a \emph{physical braiding} of the $Z$ particle around the $X$ particle.

The following theorem characterises the set of excited states describing two particle-antiparticle pairs placed at fixed positions on the lattice. We freely use the identification of $v\str\in\caG^0$ with $f\in\caG_2$.
\begin{thm}\label{thm:ExcitedBraids}
Let $v,\tilde v\in\caG_0$ and $f,\tilde f\in\caG_2$, with $v\neq \tilde v, f\neq \tilde f$, and let $\Psi\in\caS_G$. Let
\begin{equation*}
\caV_{XZ}:= \mathrm{span}\left\{X_{\gamma\str}Z_\gamma\Psi: \gamma\text{ is a walk from }v\text{ to }\tilde v,\,\gamma\str\text{ is a cowalk from }f\text{ to }\tilde f \right\}.
\end{equation*}
Then
\begin{enumerate}
\item $\caV_{XZ} = X_{\gamma\str} Z_\gamma\caS_G$ for any fixed $\gamma\str ,\gamma$; \\ 
in particular,
\begin{enumerate}
\item $\mathrm{dim}(\caV_{XZ}) = \mathrm{dim}(\caS_G)$
\item $\caV_{XZ}$ does not depend on $\Psi$,
\end{enumerate}
\item for any $\varphi\in\caV_{XZ}$, 
\begin{equation}\label{Equivalence}
B\varphi = \left(X_{1,0} Z_{0,1} X_{1,0}^{-1} Z_{0,1}^{-1}\right) \varphi
\end{equation}
and $B\varphi = \omega^{-1}\varphi$.
\end{enumerate}
\end{thm}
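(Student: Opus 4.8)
The plan is to reduce part~(1) to the algebraic structure of $\caS_G$ already established, and part~(2) to a single commutation relation counting the crossing number of the braiding loop with the excitation string. For part~(1), I would fix once and for all a reference walk $\gamma_0$ from $v$ to $\tilde v$ and a reference cowalk $\gamma_0\str$ from $f$ to $\tilde f$. The key observation is that if $\gamma$ is any other walk from $v$ to $\tilde v$ then $\partial(\gamma-\gamma_0)=(\tilde v - v)-(\tilde v-v)=0$, so $c:=\gamma-\gamma_0\in\caL_1(G;\bbZ_d)$ is a cycle, and likewise $c\str:=\gamma\str-\gamma_0\str$ is a cocycle. Using the representation property $Z_\gamma=Z_{\gamma_0}Z_c$, $X_{\gamma\str}=X_{\gamma_0\str}X_{c\str}$ and commuting $X_{c\str}$ through $Z_{\gamma_0}$, which by~(\ref{XZ}) only produces a global scalar phase $\omega^{p}$, I would rewrite $X_{\gamma\str}Z_\gamma\Psi=\omega^{p}X_{\gamma_0\str}Z_{\gamma_0}(X_{c\str}Z_c\Psi)$. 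Since $X_{c\str}Z_c$ preserves $\caS_G$ by Theorem~\ref{thm:GS Alg}, this shows $\caV_{XZ}\subseteq X_{\gamma_0\str}Z_{\gamma_0}\caS_G$. For the reverse inclusion I would invoke the cyclicity in Theorem~\ref{thm:SBraid}(iv) to write any $\phi\in\caS_G$ as a combination of $X_{\tau\str}Z_\tau\Psi$ with $[\tau]\in H_1$, $[\tau\str]\in H^1$; then $X_{\gamma_0\str}Z_{\gamma_0}X_{\tau\str}Z_\tau\Psi$ equals, up to a scalar, $X_{\gamma_0\str+\tau\str}Z_{\gamma_0+\tau}\Psi$, and since $\gamma_0+\tau$ again has boundary $\tilde v-v$ (resp.\ $\gamma_0\str+\tau\str$ has boundary $\tilde f-f$) it represents a walk (resp.\ cowalk), placing the vector in $\caV_{XZ}$. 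Equality gives $\caV_{XZ}=X_{\gamma_0\str}Z_{\gamma_0}\caS_G$; then (1a) follows because $X_{\gamma_0\str}Z_{\gamma_0}$ is unitary, and (1b) because the right-hand side makes no reference to the particular $\Psi$.

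For part~(2), I would use the description $\caV_{XZ}=X_{\gamma_0\str}Z_{\gamma_0}\caS_G$ just obtained and compute the action of $B=Z_\tau$ directly. Writing $\varphi=X_{\gamma_0\str}Z_{\gamma_0}\phi$ with $\phi\in\caS_G$, and using that $\tau$ is a boundary meeting $\gamma_0\str$ in exactly one edge, relation~(\ref{XZ}) on that single shared edge gives $Z_\tau X_{\gamma_0\str}=\omega^{-1}X_{\gamma_0\str}Z_\tau$; since $Z_\tau$ commutes with $Z_{\gamma_0}$ and $\tau$ being a boundary forces $Z_\tau\phi=\phi$ by~(\ref{Z Boundary}), I obtain $B\varphi=\omega^{-1}\varphi$, valid on all of $\caV_{XZ}$ by linearity. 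It then remains to recognise this phase as the stated group commutator. Choosing the torus generators $Z_{0,1}=Z_\nu$ for a cycle $\nu$ and $X_{1,0}=X_{\mu\str}$ for a cocycle $\mu\str$ with representatives crossing in a single edge (as in the proof of Theorem~\ref{thm:SBraid}(ii)), the same one-edge application of~(\ref{XZ}) yields $X_{1,0}Z_{0,1}X_{1,0}^{-1}Z_{0,1}^{-1}=\omega^{-1}\bbI$ as an operator on all of $\caH_G$, hence on $\caV_{XZ}$. Comparing the two computations establishes~(\ref{Equivalence}) and the value $B\varphi=\omega^{-1}\varphi$.

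I expect the main obstacle to be twofold. First, the careful bookkeeping in the reverse inclusion of part~(1): one must check that a chain of the form $\gamma_0+\tau$ (respectively $\gamma_0\str+\tau\str$) can genuinely be represented by a walk (cowalk), or argue around this using Lemma~\ref{lem: GS Classes} so that only the homology class enters the span. Second, and more delicate conceptually, is pinning down the orientation convention so that the shared-edge phase comes out as $\omega^{-1}$ rather than $\omega^{+1}$, and confirming that the local braiding loop $\tau$ links the excitation string with crossing number exactly one. This geometric input is precisely what makes the local braiding phase coincide with the commutator of the global loop operators, and it is the heart of the equivalence being claimed.
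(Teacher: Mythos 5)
Your proof is correct. Part~(1) follows the paper's argument essentially verbatim: fix reference (co)walks, note that differences of walks with the same endpoints are cycles, commute the cocycle operator through at the cost of a scalar phase, and use Lemma~\ref{lem: GS Classes} and the cyclicity statement of Theorem~\ref{thm:SBraid}(iv) for the two inclusions; the point you flag about whether $\gamma_0+\tau$ is literally a walk is resolved exactly as you suggest, by choosing the representative of each homology class to be a closed walk based at $v$ so that concatenation produces an honest walk. In part~(2) you take a genuinely different, though equivalent, route. The paper first establishes~(\ref{Equivalence}) as an identity on $\caV_{XZ}$ by a diagrammatic deformation (Figure~\ref{Fig:Fundamental}): the contractible loop $\tau$ is slid, by adding $1$-boundaries that act trivially once pushed onto $\Psi\in\caS_G$, until the expression becomes the group commutator of the two global loop operators, and only then is the value $\omega^{-1}$ read off from Theorem~\ref{thm:SBraid}(ii). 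You instead evaluate both sides separately as the scalar $\omega^{-1}$ --- the left side via the single crossing of $\tau$ with $\gamma\str$ together with $Z_\tau\upharpoonright\caS_G=\bbI_{\caS_G}$ from~(\ref{Z Boundary}), the right side as an operator identity on all of $\caH_G$ for single-crossing representatives --- and conclude by transitivity. This is a complete proof of the statement as written and is arguably more elementary, but it obtains~(\ref{Equivalence}) as a coincidence of two scalars rather than exhibiting the deformation of the local braiding loop into the global cycles, which is the structural content the theorem is meant to convey. The two caveats you identify --- pinning the orientation convention so the phase is $\omega^{-1}$ rather than $\omega$, and checking that the net oriented crossing number of $\tau$ with \emph{any} cowalk from $f$ to $\tilde f$ is $\pm 1$ (which holds because $\tau$ is a boundary winding once around $f$ and not around $\tilde f$) --- are precisely the points the paper delegates to the figure and to the remark following the theorem, so your assessment of where the residual work lies is accurate.
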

In other words, the anyonic property of the particles $X$ and $Z$, namely the local map $B$ on the l.h.s~of~(\ref{Equivalence}), is equivalent to the non-trivial structure of the ground state space, which is completely characterised by the combination $X_{1,0} Z_{0,1} X_{1,0}^{-1} Z_{0,1}^{-1}$ of non-local operators corresponding to non-trivial cycles on the r.h.s.~of~(\ref{Equivalence}).

\begin{proof}
(i) If $\gamma,\tilde\gamma$ are two walks with the same initial and final vertices, then $\gamma - \tilde\gamma$ is a cycle, and $Z_\gamma\Psi_{} = Z_{\tilde\gamma} Z_{\gamma - \tilde\gamma}\Psi_{}$ with $Z_{\gamma - \tilde\gamma} \Psi\in\caS_G$ and $Z_{\gamma - \tilde\gamma} \Psi $ depends only on the homology class of $\gamma - \tilde\gamma$ by Lemma~\ref{lem: GS Classes}. Similarly, let $\gamma\str,\tilde\gamma\str$ be two cowalks from $f$ to $\tilde f$. Then again, $X_{\gamma\str} Z_\gamma\Psi = X_{\tilde \gamma\str}X_{\gamma\str-\tilde\gamma\str}  Z_\gamma\Psi = \omega^k X_{\tilde \gamma\str}Z_\gamma X_{\gamma\str-\tilde\gamma\str}  \Psi$ which depends only on $[\gamma\str-\tilde\gamma\str]$,  and $0\leq k\leq d-1$ depends on the number of oriented crossings of the closed cowalk $\gamma\str-\tilde\gamma\str$ with the walk $\gamma$. It follows that
\begin{equation*}
\caV_{XZ}= \mathrm{span}\left\{X_{\gamma\str}Z_\gamma X_{\tau\str}Z_{\tau}\Psi: [\tau\str]\in H^1(G;\bbZ_d),[\tau]\in H_1(G;\bbZ_d)\right\},
\end{equation*}
where $\gamma\str ,\gamma$ are arbitrary but fixed (co)walks, and further $\caV_{XZ}=X_{\gamma\str} Z_\gamma\caS_G$ by Theorem~\ref{thm:SBraid}. This and the unitarity of $X_{\gamma\str} Z_\gamma$ yields immediately the equality of dimensions, while the cyclicity of any $\Psi\in\caS_G$ implies that $\caV_{XZ}$ is independent of $\Psi$.

\begin{figure}
\includegraphics[width=\textwidth]
{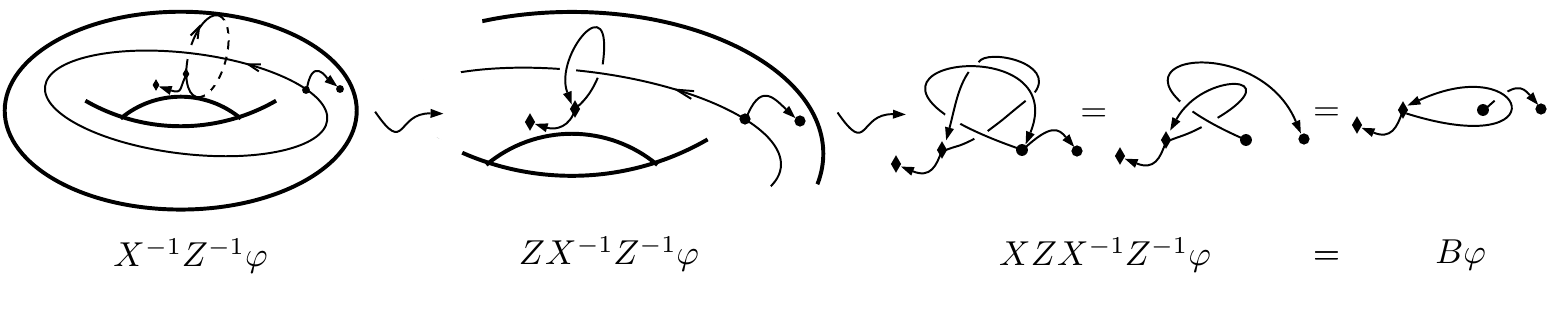}
\caption{A diagrammatic proof of~(\ref{Equivalence}). An operator acting on the left of another one is represented by a line drawn above the other line. The strings can be deformed by adding $1$-boundaries.}
\label{Fig:Fundamental}
\end{figure}
Finally, the proof of~(\ref{Equivalence}) follows from simple algebraic manipulations using only (i) and the fact that operators corresponding to cycles depend only on the (co)homology class when acting on the ground state space, Lemma~\ref{lem: GS Classes}. This can best be represented graphically as in~Figure~\ref{Fig:Fundamental}, where the ordering of operators is given by the superposition of strings, while the possible addition of (co)boundaries corresponds to the deformation of the strings. With this, the fact that $B\varphi = \omega^{-1}\varphi$ follows from Theorem~\ref{thm:SBraid}(ii).
\end{proof}

We note that $\caV_{XZ}$ does only carry a \emph{projective} unitary representation of $H_1(G;\bbZ_d)$. Indeed, a closed walk $\tau$ that is a boundary has $Z_\tau = \omega^{\frac{1}{2}(n(\tau)-\tilde n(\tau))} \bbI$ on $\caV_{XZ}$, where $0\leq n(\tau),\tilde n(\tau)\leq d-1$ is the winding number of $\tau$ around the face $f$, resp. $\tilde f$. Hence, $\tau$ detects the total charge of $X$ particles in its interior. Similarly, a closed cowalk detects the total charge of $Z$ particles in its interior.

So far, we have discussed the mutual statistics of $X$ and $Z$ as distinguishable particles. Besides, particles of the same type are indistiguishable among each other, and their properties under permutations can easily be obtained. First of all, if many $Z^k$ particles are present, they are bosons since their associated string operators commute, and similarly for $X^l$ particles. Now, for $Z^kX^l$ particles with both $k,l\neq 0$, the permutation of a pair of such particles yields a phase $\omega^{-(k+l)}$, which can be schematically represented (in the case $k=l=1$) as in Figure~\ref{Fig:Braids}.
\begin{figure}
\includegraphics[width=0.75\textwidth]
{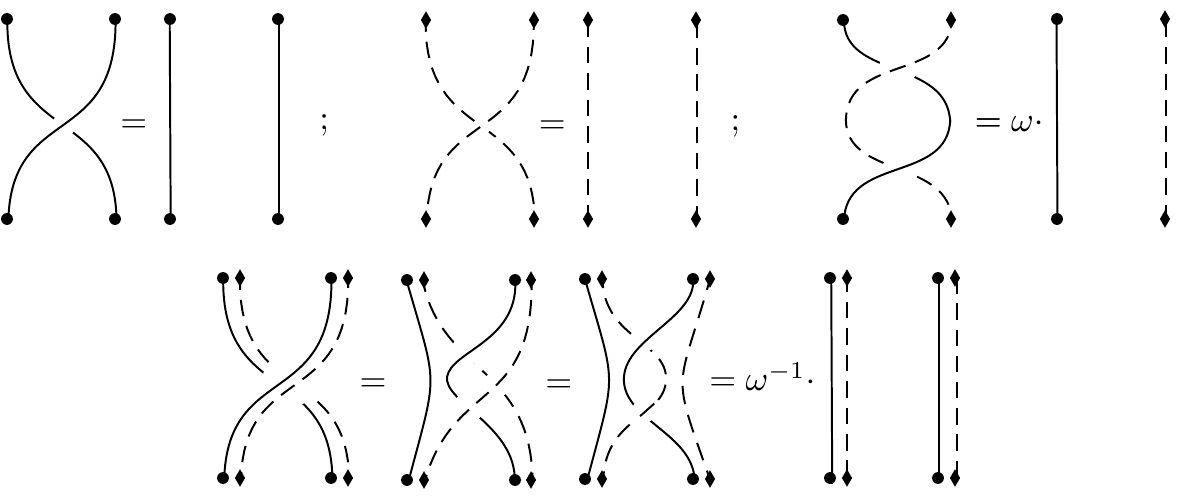}
\caption{The elementary braids of Kitaev's abelian models. $Z$ and $X$ particles are bosons (first two diagrams), their mutual statistics is given by a phase $\omega$ (third diagram); Together, these facts imply that the $ZX$ particle is an anyon with statistical phase $\omega^{-1}$ (second line): The first, resp.~second, equality follows from the bosonic nature of $Z$, resp. $X$, and the last one from the mutual statistics.}
\label{Fig:Braids}
\end{figure}

Summarising, the elementary excitations of Kitaev's model for a fixed polygon decomposition $G$ of $T_g$ can be referred to as \emph{discrete anyons}. The distinguishable particles $X^l$ and $Z^k$ have a non trivial mutual statistics arising from the representation of $H_1(G\times G\str;\bbZ^d)$ on the ground state space, while they are bosons among themselves. The composite particles, as indistiguishable particles, are abelian anyons. Here, the concept of anyon is associated with homology and cohomology groups of the discrete cell decomposition of the manifold $T_g$. These groups are closely related -- but not equal -- to the geometric braid group over $T_g$, which is, as the fundamental group of the suitable configuration space, related to homotopy, see~\cite{Birman:1969aa}. Not surprisingly, a suitably reinterpreted~(\ref{Equivalence}) also holds for the braid group (Corollary~5.1 in~\cite{Birman:1969aa}) as pointed out in~\cite{Einarsson:1990vf} and applied is the present context in~\cite{Kitaev:2003ul}. 

\section{Conclusion}

We have studied Kitaev's abelian models defined on finite polygon decompositions of closed orientable surfaces. We have in particular exhibited the equivalence of two characterisations of topological order, namely the topological ground state degeneracy and the anyonic nature of elementary particles. Both are closely related to the fact that the ground state space carries a unitary representation of the first homology group of the product of the cell complex with its dual. The concrete equivalence can be stated as an equality between a local cycle acting on the excited state space and a non-trivial combination of the global cycle and cocycle acting on the ground state space, both yielding the behaviour of the distinguishable particles under braiding. Not surprisingly, this also corresponds to a property of the two-strings unpermuted geometric braid group on the torus.

These features can be expected to be found in generic topologically ordered systems: Degenerate ground states related to each other by `Wilson loops', excited states created from the ground state space by string localised observables, anyonic statistics arising from commutation relations of the loop operators, and topological entanglement entropy. Broadly speaking, this is in fact the framework put forward in~\cite{Levin:2005fy}, and developed in~\cite{Chen:2010gb} to complement Landau's classification of phases based on local order parameters.

\subsection*{Acknowledgement.} I wish to thank H.~Schulz-Baldes for a careful reading of this article and numerous comments and suggestions.


\begin{thebibliography}{10}

\bibitem{Kitaev:2003ul}
A.~Kitaev.
\newblock {Fault-tolerant quantum computation by anyons}.
\newblock {\em Annals of Physics}, 303(1):2--30, 2003.

\bibitem{Levin:2005fy}
M.A. Levin and X.-G. Wen.
\newblock {String-net condensation: A physical mechanism for topological
  phases}.
\newblock {\em Physical Review B}, 71(4):045110, 2005.

\bibitem{Kitaev:2006ik}
A.~Kitaev.
\newblock {Anyons in an exactly solved model and beyond}.
\newblock {\em Annals of Physics}, 321(1):2--111, 2006.

\bibitem{Chen:2010gb}
X.~Chen, Z.-C. Gu, and X.-G. Wen.
\newblock {Local unitary transformation, long-range quantum entanglement, wave
  function renormalization, and topological order}.
\newblock {\em Physical Review B}, 82(15):155138, 2010.

\bibitem{Hastings:2005cs}
M.~Hastings and X.-G. Wen.
\newblock Quasiadiabatic continuation of quantum states: The stability of
  topological ground-state degeneracy and emergent gauge invariance.
\newblock {\em Physical Review B}, 72(4):045141, 2005.

\bibitem{Bachmann:2011kw}
S.~Bachmann, S.~Michalakis, B.~Nachtergaele, and R.~Sims.
\newblock Automorphic equivalence within gapped phases of quantum lattice
  systems.
\newblock {\em Communications in Mathematical Physics}, 309(3):835--871, 2011.

\bibitem{Naaijkens2015}
P.~Naaijkens.
\newblock Kitaev's quantum double model from a local quantum physics point of
  view.
\newblock {\em {Advances in Algebraic Quantum Field Theory, R.~Brunetti,
  C.~Dappiaggi, K.~Fredenhagen, and J.~Yngvason, Editors}}, pages 365--395,
  2015.

\bibitem{Cha:2016vl}
M.~Cha, P.~Naaijkens, and B.~Nachtergaele.
\newblock {The complete set of infinite volume ground states for Kitaev's
  abelian quantum double models}.
\newblock {\em arXiv:1608.04449}, 2016.

\bibitem{Freedman:2001dp}
M.H. Freedman and D.A. Meyer.
\newblock {Projective plane and planar quantum codes}.
\newblock {\em Foundations of Computational Mathematics}, 1(3):325--332, 2001.

\bibitem{Bombin:2008es}
H.~Bombin and M.A. Martin-Delgado.
\newblock {Family of non-Abelian Kitaev models on a lattice: Topological
  condensation and confinement}.
\newblock {\em Physical Review B}, 78(11):115421, 2008.

\bibitem{Fredenhagen:1989wr}
K.~Fredenhagen, K.H. Rehren, and B.~Schroer.
\newblock {Superselection sectors with braid group statistics and exchange
  algebras}.
\newblock {\em Communications in Mathematical Physics}, 1989.

\bibitem{Frohlich:1990aa}
J.~Fr\"ohlich and F.~Gabbiani.
\newblock Braid statistics in local quantum theory.
\newblock {\em Reviews in Mathematical Physics}, 02(03):251--353, 1990.

\bibitem{Ryu:2002gj}
S.~Ryu and Y.~Hatsugai.
\newblock Topological origin of zero-energy edge states in particle-hole
  symmetric systems.
\newblock {\em Physical Review Letters}, 89(7):077002, 2002.

\bibitem{Qi:2006jm}
X.-L. Qi, Y.-S. Wu, and S.-C. Zhang.
\newblock {Topological quantization of the spin Hall effect in two-dimensional
  paramagnetic semiconductors}.
\newblock {\em Physical Review B}, 74(8):085308, 2006.

\bibitem{Graf:2013ju}
G.M. Graf and M.~Porta.
\newblock Bulk-edge correspondence for two-dimensional topological insulators.
\newblock {\em Communications in Mathematical Physics}, 324(3):851--895, 2013.

\bibitem{Prodan:2016aa}
E.~Prodan and H.~Schulz-Baldes.
\newblock {\em Bulk and Boundary Invariants for Complex Topological Insulators:
  From K-Theory to Physics}.
\newblock Mathematical Physics Studies. Springer, 2016.

\bibitem{Bachmann:2014fl}
S.~Bachmann and B.~Nachtergaele.
\newblock {Product Vacua with Boundary States} and the classification of gapped
  phases.
\newblock {\em Communications in Mathematical Physics}, 329(2):509--544, 2014.

\bibitem{Bachmann:2015ia}
S.~Bachmann, E.~Hamza, B.~Nachtergaele, and A.~Young.
\newblock Product vacua and boundary state models in \emph{d}-dimensions.
\newblock {\em Journal of Statistical Physics}, 160(3):636--658, 2015.

\bibitem{Beigi:2011kr}
S.~Beigi, P.~W Shor, and D.~Whalen.
\newblock The quantum double model with boundary: Condensations and symmetries.
\newblock {\em Communications in Mathematical Physics}, 306(3):663--694, 2011.

\bibitem{Kitaev:2012bp}
A.~Kitaev and L.~Kong.
\newblock Models for gapped boundaries and domain walls.
\newblock {\em Communications in Mathematical Physics}, 313(2):351--373, 2012.

\bibitem{Nakahara:2003aa}
Mikio Nakahara.
\newblock {\em Geometry, Topology and Physics}.
\newblock Institute of Physics, 2003.

\bibitem{Verstraete:2006jk}
F.~Verstraete, M.M. Wolf, D.~P{\'e}rez-Garc{\'\i}a, D., and J.I. Cirac.
\newblock Criticality, the area law, and the computational power of {Projected
  Entangled Pair States}.
\newblock {\em Physical Review Letters}, 96(22):220601, 2006.

\bibitem{Schuch:2010jp}
N.~Schuch, J.I. Cirac, and D.~P{\'e}rez-Garc{\'\i}a.
\newblock {PEPS as ground states: Degeneracy and topology}.
\newblock {\em Annals of Physics}, 325(10):2153--2192, 2010.

\bibitem{Hamma:2005je}
A.~Hamma, R.~Ionicioiu, and P.~Zanardi.
\newblock {Ground state entanglement and geometric entropy in the Kitaev
  model}.
\newblock {\em Physics Letters A}, 337(1-2):22--28, 2005.

\bibitem{Kitaev:2006dn}
A.~Kitaev and J.~Preskill.
\newblock Topological entanglement entropy.
\newblock {\em Physical Review Letters}, 96(11):110404, 2006.

\bibitem{Fiedler:2016up}
L.~Fiedler, P.~Naaijkens, and T.J. Osborne.
\newblock {Jones index, secret sharing and total quantum dimension}.
\newblock {\em New Journal of Physics}, 19(2):023039, 2017.

\bibitem{Birman:1969aa}
J.S. Birman.
\newblock On braid groups.
\newblock {\em Communications on Pure and Applied Mathematics}, 22(1):41--72,
  1969.

\bibitem{Einarsson:1990vf}
T.~Einarsson.
\newblock {Fractional statistics on a torus}.
\newblock {\em Physical Review Letters}, 64(17):1995--1998, 1990.

\end{thebibliography}
\end{document}